\tikzstyle{block} = [rectangle,draw,text width=10em,text centered,rounded corners,minimum height=4em]
\tikzstyle{line} = [draw, -latex']
\newtheorem{defn}{Definition}
\newtheorem{lem}{Lemma}
\newtheorem{thm}{Theorem}
\newtheorem{cor}{Corollary}
\newtheorem{remark}{Remark}
\newtheorem{prop}{Proposition}
\newtheorem{ex}{Example}
\DeclareFontFamily{OMX}{MnSymbolE}{}
\DeclareSymbolFont{MnLargeSymbols}{OMX}{MnSymbolE}{m}{n}
\DeclareFontShape{OMX}{MnSymbolE}{m}{n}{
    <-6>  MnSymbolE5
   <6-7>  MnSymbolE6
   <7-8>  MnSymbolE7
   <8-9>  MnSymbolE8
   <9-10> MnSymbolE9
  <10-12> MnSymbolE10
  <12->   MnSymbolE12
}{}
\DeclareFontShape{OMX}{MnSymbolE}{b}{n}{
    <-6>  MnSymbolE-Bold5
   <6-7>  MnSymbolE-Bold6
   <7-8>  MnSymbolE-Bold7
   <8-9>  MnSymbolE-Bold8
   <9-10> MnSymbolE-Bold9
  <10-12> MnSymbolE-Bold10
  <12->   MnSymbolE-Bold12
}{}
\let\llangle\@undefined
\let\rrangle\@undefined
\DeclareMathDelimiter{\llangle}{\mathopen}%
                     {MnLargeSymbols}{'164}{MnLargeSymbols}{'164}
\DeclareMathDelimiter{\rrangle}{\mathclose}%
                     {MnLargeSymbols}{'171}{MnLargeSymbols}{'171}
\def\leq{\leqslant}
\def\geq{\geqslant}
\def\adag{a^\dagger}
\def\phid{\phi^\dagger}
\def\dg{\color{green!65!black}}
\def\bl{\color{blue!75!black}}
\newcommand{\bra}[1]{\left\langle #1\right|}
\newcommand{\ket}[1]{\left|#1\right\rangle}
\newcommand{\Ket}[1]{\left\Vert#1\right\rrangle}
\DeclareMathOperator{\Span}{Span}
\DeclareMathOperator{\Tr}{Tr}
\begin{document}

\title{A new generalisation of Macdonald polynomials}

\author{Alexandr Garbali, Jan de Gier and Michael Wheeler}
\address{ARC Centre of Excellence for Mathematical and Statistical Frontiers (ACEMS), School of Mathematics and Statistics, University of Melbourne, Parkville, Victoria 3010, Australia}
\email{alexandr.garbali@unimelb.edu.au, jdgier@unimelb.edu.au, wheelerm@unimelb.edu.au}

\maketitle

\begin{abstract}
We introduce a new family of symmetric multivariate polynomials, whose coefficients are meromorphic functions of two parameters $(q,t)$ and polynomial in a further two parameters $(u,v)$. We evaluate these polynomials explicitly as a matrix product. At $u=v=0$ they reduce to Macdonald polynomials, while at $q=0$, $u=v=s$ they recover a family of inhomogeneous symmetric functions originally introduced by Borodin.
\end{abstract}


\section{Introduction}

\subsection{Background}

The Macdonald polynomials \cite{Macd88,MacdBook}, denoted $P_{\lambda}(x_1,\dots,x_n;q,t)$, are a celebrated basis for the ring of symmetric functions in $n$ variables.  They simultaneously generalise many important classes of symmetric functions, including the Schur, Hall--Littlewood and Jack polynomials, which can all be recovered by appropriate specialisations of the parameters $(q,t)$. Macdonald polynomials have been deeply influential in a variety of disciplines of mathematics, from the representation theory of affine Hecke algebras \cite{Cher95a,Cher95b,Opdam}, to Hilbert schemes \cite{Haiman}, to integrable stochastic systems \cite{BorodinC}. Despite the generality of Macdonald polynomials, they are themselves special cases of some even more general classes. These include the interpolation and Koornwinder polynomials, which are both examples of \textit{inhomogeneous} symmetric functions that include Macdonald polynomials at their leading degree.

The purpose of this paper is to introduce a new generalisation of Macdonald polynomials, denoted $P_{\lambda}(x_1,\dots,x_n;q,t;u,v)$, which have polynomial dependence on two additional parameters $u$ and $v$. Like the examples listed above, these functions are inhomogeneous symmetric polynomials in $(x_1,\dots,x_n)$, but the Macdonald polynomials do not occur as their top degree -- rather, they are embedded in $P_{\lambda}(x_1,\dots,x_n;q,t;u,v)$ as the constant term in $(u,v)$. The key characteristic of these polynomials is that they not only generalise Macdonald polynomials, they are also generalisations of a family of inhomogeneous symmetric functions $F_{\lambda}(x_1,\dots,x_n;t;s)$ recently studied by Borodin and Petrov \cite{Borodin,BorodinP}. Reducing our family of polynomials to known cases can be summarised by the following commutative diagram:
\begin{center}
\begin{tikzpicture}[node distance = 3.5cm,auto]
\node [block] (P) {$P_{\lambda}(x_1,\dots,x_n;q,t;u,v)$};
\node [block,below right of=P] (B) 
{$F_{\lambda}(x_1,\dots,x_n;t;s)$\\ {\bf Borodin--Petrov}};
\node [block,below left of=P] (M) 
{$P_{\lambda}(x_1,\dots,x_n;q,t)$\\ {\bf Macdonald}};
\node[block,below right of=M] (HL)
{$P_{\lambda}(x_1,\dots,x_n;t)$\\ {\bf Hall--Littlewood}};
\path [line] (P) -- node[right] {\ \ $q=0$, $u=v=s$} (B);
\path [line] (P) -- node[left] {$u=v=0$\ \ } (M);
\path[line] (M) -- node[left] {$q=0$\ \ } (HL); 
\path[line] (B) -- node[right] {\ \ $s=0$} (HL);
\end{tikzpicture}
\end{center}
We expect that the family of polynomials $P_{\lambda}(x_1,\dots,x_n;q,t;u,v)$ will prove to be important for several reasons: {\bf 1.} Both $P_{\lambda}(x_1,\dots,x_n;q,t)$ and, more recently, $F_{\lambda}(x_1,\dots,x_n;t;s)$ have been shown to be vital in integrable probability. Macdonald processes include a wide range of particle-hopping processes as their specialisations \cite{BorodinC}, and it appears that the stochastic vertex model used in the construction of $F_{\lambda}(x_1,\dots,x_n;t;s)$ plays a similarly powerful role, containing a number of sub-processes as special cases \cite{BorodinP}. The sheer existence of $P_{\lambda}(x_1,\dots,x_n;q,t;u,v)$ suggests that these two, somewhat complementary pictures could be unified. {\bf 2.} Given that both $P_{\lambda}(x_1,\dots,x_n;q,t)$ and $F_{\lambda}(x_1,\dots,x_n;t;s)$ enjoy a host of special properties, such as Cauchy identities, branching rules and Pieri identities, it is natural to expect that their mutual generalisation, $P_{\lambda}(x_1,\dots,x_n;q,t;u,v)$, will too. We plan to address such questions in a separate publication. {\bf 3.} The functions $F_{\lambda}(x_1,\dots,x_n;t;s)$ contain the (Grassmannian) Grothendieck polynomials as a special case \cite{Borodin-com}. This means that $P_{\lambda}(x_1,\dots,x_n;q,t;u,v)$ links Macdonald polynomials with polynomials that have $K$-theoretic content, going beyond the known reduction to cohomology (the reduction to Schur polynomials). 

Unlike the traditional approach in Macdonald theory, in which Macdonald polynomials are defined by a set of properties and then proven to exist, we shall instead write down an explicit formula for $P_{\lambda}(x_1,\dots,x_n;q,t;u,v)$. Our methodology is essentially the same as in \cite{CantinidGW}, where a matrix product formula was obtained for the Macdonald polynomials. A key ingredient of this approach is to construct a {\it non-symmetric} family of polynomials $f_{\mu}(x_1,\dots,x_n;q,t;u,v)$, where $\mu$ denotes a composition, which can be summed appropriately to produce the symmetric polynomial $P_{\lambda}(x_1,\dots,x_n;q,t;u,v)$. In this sense, the family $f_{\mu}(x_1,\dots,x_n;q,t;u,v)$ plays an analogous role to non-symmetric Macdonald polynomials \cite{Cher95b,Opdam}, although it should be emphasised that they are not the same as the latter, even at the special value $u=v=0$ of the parameters. Below we outline the basics of our construction.

\subsection{Layout of the paper} 

In Section \ref{se:f} we study polynomial representations of the type $A_{n-1}$ Hecke algebra and families of polynomials $f_{\mu}$ which satisfy local quantum Knizhnik--Zamolodchikov exchange relations. We show that, by summing $\mu$ over all permutations of a partition 
$\lambda$, one obtains a symmetric function $P_{\lambda}$. Section \ref{se:3} examines, in a general setting, how it is possible to construct such families $f_{\mu}$ as matrix products. As we show, the basic requirements for the construction are {\bf 1.} A suitable solution of the (higher-rank) Yang--Baxter algebra, and {\bf 2.} A suitable linear form which maps elements of the algebra to the space of polynomials in $n$ variables.

In Section \ref{se:new-L} we present a new solution of the Yang--Baxter algebra of generic rank $r$, in terms of the algebra of $t$-deformed bosons. We construct an $L$-matrix that satisfies the Yang--Baxter algebra, starting from a solution of Jimbo \cite{Jimbo86a} and applying an algebra homomorphism (the details are deferred to Section \ref{proofrll}). In Section \ref{se:poly}, we are then able to apply the general theory developed at the start of the paper to the specific solution of the Yang--Baxter algebra obtained in Section \ref{se:new-L}. This leads us to explicit matrix product formulae for both $f_{\mu}(x_1,\dots,x_n;q,t;u,v)$ and $P_{\lambda}(x_1,\dots,x_n;q,t;u,v)$, which are the main results of the paper. Section \ref{se:mac} proves the reduction to Macdonald polynomials at $u=v=0$, which is almost immediate by virtue of the results in \cite{CantinidGW}. Section \ref{se:bor-pet} proves the (much more challenging) reduction to the Borodin--Petrov polynomials at $q=0$, $u=v=s$.

\subsection{Notation and conventions}

A {\it composition} $\mu=(\mu_1,\dots,\mu_n)$ is an $n$-tuple of non-negative integers, and $\mu_i$ is its $i^{\rm th}$ {\it part}. A {\it partition} $\lambda=(\lambda_1,\dots,\lambda_n)$ is an $n$-tuple of non-negative integers, which satisfy $\lambda_1 \geq \cdots \geq \lambda_n \geq 0$. Given a composition $\mu$, we let $\mu^+$ denote the unique partition which can be obtained by reordering the parts of $\mu$. Throughout this paper $\lambda$ will always refer to a partition and $\mu$ to a composition. Furthermore, the largest part $\lambda_1$ of $\lambda$ will be denoted by $r$, for rank:
\begin{equation}
\label{eq:rank}
r=\lambda_1.
\end{equation}

\section{Families of non-symmetric polynomials and quantum Knizhnik--Zamolodchikov exchange relations}
\label{se:f}

Following \cite{CantinidGW,KasataniT}, we study families of non-symmetric polynomials which satisfy local quantum Knizhnik--Zamolodchikov exchange relations. The exchange relations are expressed via the action of generators of the Hecke algebra. Our aim here is to discuss such families at a completely general level, as we will only specialise to a particular non-symmetric family later on in the paper. 

\subsection{Polynomial representation of Hecke algebra}

We consider polynomial representations of the Hecke algebra of type $A_{n-1}$, with generators $T_{i}$ given by 
\begin{align}
\label{hecke-gen}
T_i
=
t-\frac{tx_i-x_{i+1}}{x_i-x_{i+1}}(1-\sigma_i),
\qquad
1 \leq i \leq n-1,
\end{align} 
where $\sigma_i$ is the transposition operator with action $(\sigma_i  p)(\dots,x_i,x_{i+1},\dots)= p(\dots,x_{i+1},x_i,\dots)$ on any polynomial $p(x_1,\dots,x_n)$. The operators \eqref{hecke-gen} provide a faithful representation of the Hecke algebra:
\begin{align}
\label{hecke-alg}
(T_{i}-t)(T_{i}+1)=0,
\qquad
T_{i} T_{i + 1} T_{i} = T_{i + 1} T_i T_{i + 1},
\qquad
T_i T_j = T_j T_i, \ |i-j| > 1.
\end{align}

\subsection{Non-symmetric polynomials and quantum Knizhnik--Zamolodchikov equations}

Consider for each partition $\lambda$ a family of polynomials indexed by compositions that are permutations of $\lambda$. We denote this family by $\{f_{\mu}(x_1,\dots,x_n)\}_{\mu^+=\lambda}$ and assume that it satisfies the following relations with respect to the generators \eqref{hecke-gen} of the Hecke algebra:
\begin{align}
\label{eq:exchange1}
&T_i f_{\mu_1,\dots,\mu_n}(x_1,\dots,x_n)
=
f_{\mu_1,\dots,\mu_{i+1},\mu_i,\dots,\mu_n}
(x_1,\dots,x_n),
\qquad\ 
\text{when}\ \ 
\mu_i > \mu_{i+1},
\\
\label{eq:exchange2}
&T_i f_{\mu_1,\dots,\mu_n}(x_1,\dots,x_n)
=
t f_{\mu_1,\dots,\mu_{i+1},\mu_i,\dots,\mu_n}
(x_1,\dots,x_n),
\qquad
\text{when}\ \ 
\mu_i = \mu_{i+1}.
\end{align}
These two relations play quite different roles. Equation \eqref{eq:exchange1} expresses one member of the family in terms of another, where the two members are related by a simple transposition of their indexing composition. On the other hand, \eqref{eq:exchange2} dictates that $f_{\mu}$ is symmetric in $(x_i,x_{i+1})$ if $\mu_i = \mu_{i+1}$.

The relations \eqref{eq:exchange1}--\eqref{eq:exchange2} {\it do not} uniquely determine the family $\{f_{\mu}(x_1,\dots,x_n)\}_{\mu^+=\lambda}$. Indeed, by choosing $f_{\lambda}$ to be any polynomial which is symmetric in $(x_i,x_{i+1})$ if $\lambda_i = \lambda_{i+1}$, then acting with \eqref{eq:exchange1} to build up the entire family, one finds that \eqref{eq:exchange1}--\eqref{eq:exchange2} hold generally. By supplementing these relations by appropriate boundary conditions (such as, for example, a cyclic property \cite{CantinidGW,KasataniT}) the family can be made unique, but this will not concern us in the present work.

\subsection{Symmetric polynomials}
\label{se:symmpol}

Motivated by the theory of non-symmetric Macdonald polynomials, we now consider polynomials $P_{\lambda}(x_1,\dots,x_n)$ which are obtained by summing over all $f_{\mu}$, such that $\mu$ lies in the Weyl orbit of the partition $\lambda$:
\begin{align*}
P_{\lambda}(x_1,\dots,x_n)
:=
\sum_{\mu: \mu^{+}=\lambda}
f_{\mu}(x_1,\dots,x_n).
\end{align*}
By virtue of the exchange relations \eqref{eq:exchange1}--\eqref{eq:exchange2}, we can easily deduce the following property of $P_\lambda$.

\begin{lem}{\rm
The polynomial $P_{\lambda}(x_1,\dots,x_n)$ is symmetric in $(x_1,\dots,x_n)$.
}
\end{lem}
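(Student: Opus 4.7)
The plan is to show that $P_{\lambda}$ is invariant under every simple transposition $\sigma_i$, which is equivalent to symmetry since the $\sigma_i$ generate $S_n$. I would achieve this by computing $T_i P_{\lambda}$ directly from the exchange relations \eqref{eq:exchange1}--\eqref{eq:exchange2} and showing $T_i P_{\lambda} = t P_{\lambda}$. Given the explicit form \eqref{hecke-gen} of $T_i$, the identity $T_i P_{\lambda} = t P_{\lambda}$ rearranges to $\tfrac{tx_i - x_{i+1}}{x_i - x_{i+1}}(1-\sigma_i) P_{\lambda} = 0$, which forces $\sigma_i P_{\lambda} = P_{\lambda}$.

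To compute $T_i P_{\lambda}$, I would split the sum over $\{\mu : \mu^+ = \lambda\}$ into three pieces according to whether $\mu_i > \mu_{i+1}$, $\mu_i = \mu_{i+1}$, or $\mu_i < \mu_{i+1}$. The first two cases are handled directly by \eqref{eq:exchange1} and \eqref{eq:exchange2}. For the third case, which is not covered by the hypotheses, I would derive the needed rule from the Hecke quadratic relation $(T_i - t)(T_i + 1) = 0$, i.e. $T_i^2 = (t-1)T_i + t$. If $\mu_i < \mu_{i+1}$ and $\nu := s_i \mu$ satisfies $\nu_i > \nu_{i+1}$, then $T_i f_{\nu} = f_{\mu}$ by \eqref{eq:exchange1}, so applying $T_i$ once more gives
\begin{equation*}
T_i f_{\mu} = T_i^2 f_{\nu} = (t-1) T_i f_{\nu} + t f_{\nu} = (t-1) f_{\mu} + t f_{s_i \mu}.
\end{equation*}

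With this rule in hand, I would reindex the $\mu_i > \mu_{i+1}$ sum and the $\mu_i < \mu_{i+1}$ sum via $\mu \mapsto s_i \mu$ (a bijection between these two sets of compositions), and collect terms. Schematically, the contribution $\sum_{\mu_i > \mu_{i+1}} f_{s_i \mu}$ becomes $\sum_{\mu_i < \mu_{i+1}} f_{\mu}$, the contribution $\sum_{\mu_i < \mu_{i+1}} t f_{s_i \mu}$ becomes $\sum_{\mu_i > \mu_{i+1}} t f_{\mu}$, the middle block contributes $t \sum_{\mu_i = \mu_{i+1}} f_{\mu}$, and the leftover $(t-1)\sum_{\mu_i < \mu_{i+1}} f_{\mu}$ combines with $\sum_{\mu_i < \mu_{i+1}} f_{\mu}$ to yield $t \sum_{\mu_i < \mu_{i+1}} f_{\mu}$. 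Adding the three contributions gives $t P_{\lambda}$.

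The main obstacle, if one can call it that, is the mild subtlety that the exchange relations only handle two of the three possible orderings of $(\mu_i, \mu_{i+1})$; one must realize that the Hecke quadratic relation fills in the third case consistently. Once that is noticed, the rest is bookkeeping: a bijective reindexing of the sum followed by extracting $\sigma_i P_{\lambda} = P_{\lambda}$ from the identity $T_i P_{\lambda} = t P_{\lambda}$ via the explicit formula \eqref{hecke-gen}.
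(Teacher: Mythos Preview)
Your proposal is correct and follows essentially the same approach as the paper: both prove $T_i P_{\lambda} = t P_{\lambda}$ by splitting the sum according to the sign of $\mu_i - \mu_{i+1}$, using the Hecke quadratic relation $T_i^2 = (t-1)T_i + t$ to handle the case $\mu_i < \mu_{i+1}$, and then reindexing via $\mu \mapsto s_i\mu$. Your write-up is slightly more explicit about why $T_i P_{\lambda} = t P_{\lambda}$ forces $\sigma_i P_{\lambda} = P_{\lambda}$, but the argument is otherwise identical.
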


\begin{proof}
We need to show that $T_i P_\lambda = t P_\lambda$ for all $i=1,\ldots,n-1$, since this would imply symmetry in $(x_1,\dots,x_n)$. Acting with $T_i$ on \eqref{eq:exchange1} we find that, for $\mu_i < \mu_{i+1}$,
\begin{align*}
T_i 
f_{\ldots,\mu_i,\mu_{i+1},\ldots} 
&= 
T_i^2  
f_{\ldots,\mu_{i+1},\mu_{i},\ldots} 
= 
\left(t+(t-1)T_i \right)  
f_{\ldots,\mu_{i+1},\mu_{i},\ldots} 
= 
t f_{\ldots,\mu_{i+1},\mu_{i},\ldots} 
+
(t-1) f_{\ldots,\mu_{i},\mu_{i+1},\ldots}.
\end{align*}
We thus find
\begin{align*}
T_i \sum_\mu f_\mu 
&= 
\sum_{\mu: \mu_i < \mu_{i+1}} 
\left( t f_{\sigma_i\mu} +(t-1)f_\mu \right) 
+ 
\sum_{\mu: \mu_i = \mu_{i+1}} 
t f_\mu 
+ 
\sum_{\mu: \mu_i > \mu_{i+1}} 
f_{\sigma_i\mu} 
\nonumber
\\
&= 
\sum_{\mu: \mu_i < \mu_{i+1}} 
t f_{\sigma_i\mu} 
+ 
\sum_{\mu:\mu_i \leq \mu_{i+1}} 
t f_\mu = t \sum_\mu f_\mu.
\end{align*}
\end{proof}
We have thus shown that if we have a family $\{f_{\mu}(x_1,\dots,x_n)\}_{\mu^+=\lambda}$ of non-symmetric polynomials obeying \eqref{eq:exchange1}--\eqref{eq:exchange2}, then the polynomial $P_{\lambda}$, obtained by summing over all members of the family, is symmetric. This result is the foundation which allows us to construct the new family 
$P_{\lambda}(x_1,\dots,x_n;q,t;u,v)$.

\section{Matrix product expression}
\label{se:3}

In this section we explain a general construction to obtain explicit families of polynomials that satisfy the relations \eqref{eq:exchange1}--\eqref{eq:exchange2}, using solutions of the Yang--Baxter algebra.

\subsection{Matrix product expression for $f_{\mu}$ and Zamolodchikov--Faddeev algebra}
\label{se:mpa}
We begin by writing explicitly the higher-rank $R$-matrices which, in Jimbo's classification \cite{Jimbo86b}, are solutions of the $U_{t^{1/2}}(A_r^{(1)} )$ Yang--Baxter equation\footnote{We refrain from using the parameter $q$ when writing the quantum group, since this would create confusion with the $q$ parameter in Macdonald polynomials.}:   
\begin{multline}
\label{Rmat-def}
R(z)
=
\sum_{i=0}^{r}
E_{i,i} \otimes E_{i,i} 
\\
+
\sum_{0 \leq i < j \leq r}
\left(
b_{+}(z)
E_{i,i} \otimes E_{j,j}
+
b_{-}(z)
E_{j,j} \otimes E_{i,i}
+
c_{+}(z)
E_{i,j} \otimes E_{j,i}
+
c_{-}(z)
E_{j,i} \otimes E_{i,j}
\right),
\end{multline}
where $E_{i,j}$ is the matrix with a $1$ at position $(i,j)$ and zeros everywhere else, and the matrix entries are given by
\begin{align*}
b_{+}(z)
=
\frac{1-z}{1-tz},
\qquad
b_{-}(z)
=
\frac{t(1-z)}{1-tz},
\qquad
c_{+}(z)
=
\frac{1-t}{1-tz},
\qquad
c_{-}(z)
=
\frac{(1-t)z}{1-tz}.
\end{align*}
The $R$-matrix \eqref{Rmat-def} is in fact twisted, in the sense of {\it Drinfeld twists,} in such a way that all its columns sum to $1$. It therefore generalises the stochastic six-vertex model to arbitrary rank. We define from this the $\check{R}$-matrix, given by $\check{R}(z) =P R(z)$, where $P$ is the $(r+1)^2 \times (r+1)^2$ permutation matrix.

Now assume that there exist linear operators $A_i(x)\ (i=0,1,\ldots,r)$ acting on some vector space $\mathcal{F}$, a linear form $\rho: {\rm End}(\mathcal{F}) \rightarrow \mathbb{C}[x_1,\ldots,x_n]$, and define for all compositions $\mu$ with largest part $r$ the polynomial
\begin{align}
f_\mu(x_1,\ldots,x_n) := \rho\left( A_{\mu_1}(x_1) \cdots A_{\mu_n}(x_n) \right).
\label{eq:mpa}
\end{align}
It is easy to show \cite{CantinidGW,CrampeRV} that such a family $\{f_{\mu}\}_{\mu^{+} = \lambda}$ satisfies \eqref{eq:exchange1}--\eqref{eq:exchange2}, provided that the operators $A_i(x)$ obey the Zamolodchikov--Faddeev (ZF) algebra \cite{Fad1980,ZZ1979}:
\begin{align}
\check{R}(x/y)\cdot\left [\mathbb{A}(x)\otimes \mathbb{A}(y)\right] 
= 
\left [\mathbb{A}(y)\otimes \mathbb{A}(x)\right] ,
\label{eq:ZFdef}
\end{align}
where $\check{R}(x/y)$ is the $\check{R}$-matrix based on $U_{t^{1/2}}(A_r^{(1)} )$, and 
$\mathbb{A}(x)$ is an $(r+1)$-dimensional operator valued column vector given by
\[
\mathbb{A}(x) = (A_0(x),\ldots, A_r(x))^{\rm T}.
\]

\subsection{Solutions of the ZF algebra from the Yang--Baxter algebra}

One way to obtain a set of operators $A_i(x)$ that satisfy the ZF relations \eqref{eq:ZFdef} is to inherit them from a solution to the Yang--Baxter algebra. The Yang--Baxter algebra is the set of bilinear relations which are encoded by the equation
\begin{align}
\label{rll-eq}
\check{R}(x/y) \cdot \left[L(x)\otimes L(y)\right]
=
\left[L(y)\otimes L(x)\right] \cdot \check{R}(x/y),
\end{align}
where $L(x)$ is an $(r+1) \times (r+1)$ matrix, whose entries are operators acting on $\mathcal{F}$. More generally, the Yang--Baxter algebra also applies to any product of $L$-matrices which satisfy \eqref{rll-eq}. For example, if we construct the rank-$r$ monodromy matrix
\begin{align}
\label{monodromy}
T(x)
=
L^{(1)}(x)
\cdots
L^{(r)}(x),
\end{align}
where each $L$-matrix $L^{(i)}(x)$ satisfies \eqref{rll-eq} and acts on a separate copy $\mathcal{F}^{(i)}$ of the Hilbert space, then it immediately follows that
\begin{align}
\check{R}(x/y) \cdot \left[T(x)\otimes T(y)\right]
=
\left[T(y)\otimes T(x)\right] \cdot \check{R}(x/y).
\end{align}
Solutions of the ZF algebra may then be obtained as follows:
\begin{prop}{\rm
If $\mathbb{A}(x)$ is identified with any column of $T(x)$ as defined in \eqref{monodromy}, then \eqref{eq:ZFdef} holds.
}
\end{prop}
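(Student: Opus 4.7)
The plan is to reduce the ZF relation to the Yang--Baxter algebra for $T(x)$ established just above the proposition. I would start by noting that fixing a column of $T(x)$ amounts to evaluating it on a standard basis vector: if $\mathbb{A}(x)$ is the $k$-th column of $T(x)$, then $\mathbb{A}(x) = T(x)\, e_k$, where $e_k$ is the $k$-th basis vector of the auxiliary space $\mathbb{C}^{r+1}$. Consequently
\[
\mathbb{A}(x) \otimes \mathbb{A}(y) = [T(x) \otimes T(y)] \cdot (e_k \otimes e_k),
\]
and applying $\check{R}(x/y)$ on the left, together with the Yang--Baxter algebra for the monodromy, yields
\[
\check{R}(x/y) [\mathbb{A}(x) \otimes \mathbb{A}(y)]
=
[T(y) \otimes T(x)] \cdot \check{R}(x/y)(e_k \otimes e_k).
\]

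What remains is to show that $e_k \otimes e_k$ is fixed by $\check{R}(z)$. This step I would dispose of by direct inspection of \eqref{Rmat-def}: among the diagonal pieces $E_{i,i}\otimes E_{i,i}$ only the one with $i=k$ contributes, producing $e_k \otimes e_k$; the mixed diagonal pieces $E_{i,i}\otimes E_{j,j}$ and $E_{j,j}\otimes E_{i,i}$ indexed by $i<j$ annihilate $e_k \otimes e_k$ because the two tensor factors carry equal indices; and the off-diagonal pieces $E_{i,j}\otimes E_{j,i}$ and $E_{j,i}\otimes E_{i,j}$ vanish for the same reason. Hence $R(z)(e_k \otimes e_k) = e_k \otimes e_k$, and since the permutation $P$ fixes diagonal vectors, $\check{R}(z)(e_k\otimes e_k) = e_k \otimes e_k$. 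The right-hand side of the earlier display then collapses to $[T(y)\otimes T(x)](e_k \otimes e_k) = \mathbb{A}(y) \otimes \mathbb{A}(x)$, which is precisely \eqref{eq:ZFdef}.

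There is no serious obstacle to the argument; it is essentially formal once one has spotted the identification $\mathbb{A}(x) = T(x)\,e_k$ and the fixed-vector property of $\check{R}$ on diagonal basis vectors. The latter is clean in the present normalisation \eqref{Rmat-def}; in a differently normalised convention one would only have $R(z)(e_k \otimes e_k) = \rho(z)(e_k \otimes e_k)$ for some scalar $\rho(z)$, and would then need to rescale the monodromy (or the operators $A_i$) accordingly, but this would not alter the underlying mechanism.
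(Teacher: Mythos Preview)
Your argument is correct and is exactly the standard mechanism one expects here: evaluate the Yang--Baxter relation for $T$ on the diagonal vector $e_k\otimes e_k$, and use that this vector is fixed by $\check{R}(z)$ in the normalisation \eqref{Rmat-def}. The paper in fact states this proposition without proof, treating it as routine; your write-up supplies precisely the details one would fill in, so there is nothing to compare against and nothing to correct.
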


\section{A new $L$-matrix}
\label{se:new-L}

In this section we present a new solution to the Yang--Baxter algebra \eqref{rll-eq}. To formulate this solution we first introduce $t$-deformed bosonic operators.

\subsection{The algebra of $t$-bosons}

The $t$-boson algebra $\mathcal{B} = \langle \phi,\phid,k\rangle$ is generated by three operators $\{\phi,\phid,k\}$, that satisfy the bilinear relations
\begin{align}
\label{t-bos}
\phi k = t k \phi,
\qquad
t \phid k = k \phid,
\qquad
\phi \phid - t \phid \phi = 1-t.
\end{align}
Define vector spaces $\mathcal{F}=\Span\{\ket{m}\}_{m=0}^\infty$ and $\mathcal{F}^*=\Span\{\bra{m}\}_{m=0}^\infty$, which will be the representation spaces for the $t$-boson algebra. We use the Fock and dual Fock representation of the algebra \eqref{t-bos}:
\begin{align*}
\phi \ket{m} = (1-t^m) \ket{m-1},
\qquad
\phid \ket{m} = \ket{m+1},
\qquad
k \ket{m} = t^m \ket{m},
\\
\nonumber
\bra{m} \phi = (1-t^{m+1}) \bra{m+1},
\qquad
\bra{m} \phid = \bra{m-1},
\qquad
\bra{m} k = t^m \bra{m}.
\end{align*}

\subsection{Some important remarks on notation}

It will be necessary to use $r^2$ commuting copies of the $t$-boson algebra \eqref{t-bos}. We shall distinguish these copies by the use of subscripts and superscripts and write them as $\mathcal{B}_i^{(j)} = \langle \phi_i^{(j)},\phi^{\dagger(j)}_i,k_i^{(j)} \rangle$, where $1 \leq i,j \leq r$. The operators in two algebras $\mathcal{B}_a^{(b)}$ and $\mathcal{B}_c^{(d)}$ mutually commute, unless $a=c$ and $b=d$, in which case the two algebras are identically equivalent. 

At all times, we use subscripts $i$ to distinguish between different {\it families} of bosons. There will be $r$ different families, and accordingly $1 \leq i \leq r$. Superscripts $j$, on the other hand, are used to indicate bosons with occur in the $j^{\rm th}$ $L$-matrix in the product \eqref{monodromy}. When it is not important to specify from which $L$-matrix the bosons come, we will omit the superscript to lighten the notation.

\subsection{A higher rank solution of the intertwining equation}

One of the key results in this paper is a new solution of the Yang--Baxter algebra \eqref{rll-eq} in terms of the algebra \eqref{t-bos} of $t$-bosons. As we discuss below, it generalises some known solutions of \eqref{rll-eq} to arbitrary values of the parameters $u$, $v$ and of the rank $r$. We define an $L$-matrix as follows:
\begin{align}
\label{eq:Lmat}
\nonumber
L_{00}
=
1-u x\prod_{l=1}^{r} k_l,
\qquad
&
L_{0j}
=
\left(1-uv\prod_{l=1}^{r} k_l\right ) \phi_j,\ \ 
\text{for }\ 1 \leq j \leq r,
\qquad
L_{i0}(x)
=
x 
\left( \prod_{l=i+1}^{r} k_l \right)
\phid_i,
\\
\nonumber
\\
&
L_{ij}(x)
=
\left\{
\begin{array}{ll}
(x-v k_i) \prod_{l=i+1}^{r}
k_l,
\qquad
&
i=j
\\
\\
x  
\left(\prod_{l=i+1}^{r} k_l \right)
\phid_i \phi_j ,
\qquad
&
i>j
\\
\\
v \left( \prod_{l=i+1}^{r}
k_l\right)
\phid_i  \phi_j,
\qquad
&
i<j
\end{array}
\right.
\quad\ \text{for }\ 
1 \leq i \leq r,\ 
1 \leq j \leq r,
\end{align}
where, as mentioned above, subscripts designate $r$ different families of bosons.

\begin{remark}{\rm
For $r= 1$ and $u=-1$ this $L$-matrix was introduced in the context of integrable stochastic models by Povolotsky \cite{Povo13}, see also Corwin and Petrov \cite{CorwinP} for a generalisation to higher spin versions. It also essentially appears in the explicit formulas of the $R$-matrix and the $Q$-operators in the work of Mangazeev \cite{Mang14a,Mang14b}. 

For $u = v = 0$ the model reduces to the $r$-species $t$-boson process appearing in the works of Prolhac et al. \cite{ProlhacEM} and Arita et al. \cite{AritaAMP} in the context of matrix product states for the asymmetric exclusion process, and in the works of Inoue et al. \cite{InoueKO}, Takeyama \cite{Takey14,Takey15}, inspired by the $r = 1$ case in \cite{SasaW98}, and Tsuboi \cite{Tsuboi}. In particular it reduces to the $L$-matrix used in \cite{CantinidGW} to construct a matrix product expression for Macdonald polynomials.

A characterisation of a rank-$r$ and higher spin $R$-matrix is presented in the recent work of Kuniba et al. \cite{KunibaMMO}, but no explicit form was given. We expect that our $L$-matrix is related to this work.

}
\end{remark}

\begin{thm}
\label{rll}{\rm
The $L$-matrix defined in \eqref{eq:Lmat} satisfies the Yang-Baxter algebra \eqref{rll-eq}.
}
\end{thm}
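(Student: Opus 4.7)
The plan is to exploit the fact that the RLL equation \eqref{rll-eq} is a set of purely algebraic bilinear identities in the entries of $L$, and so is preserved under any algebra homomorphism. My strategy is to begin from Jimbo's solution \cite{Jimbo86a} of the Yang--Baxter algebra for $U_{t^{1/2}}(A_r^{(1)})$, in which $L(x)$ is expressed in terms of generators of an algebra that satisfies \eqref{rll-eq} by construction, and then to exhibit an explicit algebra homomorphism from (a suitable subalgebra of) $U_{t^{1/2}}(A_r^{(1)})$ into the tensor product $\bigotimes_{i=1}^{r} \mathcal{B}_i$ of $t$-boson algebras. Pushing Jimbo's $L$-matrix through this homomorphism entrywise will produce the matrix \eqref{eq:Lmat}, from which \eqref{rll-eq} follows automatically. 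When forming $L(x)\otimes L(y)$ the two factors live in disjoint copies $\mathcal{B}_i^{(1)}$ and $\mathcal{B}_i^{(2)}$, so there are no ordering ambiguities between the two tensor slots.

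The key step is to identify the homomorphism so that its image reproduces the precise entries of \eqref{eq:Lmat}. The block $L_{ij}$ for $1 \leq i,j \leq r$ is essentially the $r$-species $t$-boson $L$-matrix used at $u=v=0$ in \cite{CantinidGW}, built from $\phid_i \phi_j$ sandwiched between strings of Cartan elements $\prod_{l=i+1}^{r} k_l$. I would first verify that these bosonic combinations realise the defining relations of the relevant $U_{t^{1/2}}(A_r)$ subalgebra --- a direct consequence of \eqref{t-bos}. The parameters $u$ and $v$ are then introduced by deforming this basic solution through a pair of twists: a multiplicative twist on the auxiliary $0$-th row and column producing the scalars $1-ux\prod_l k_l$ in $L_{00}$, $(1-uv\prod_l k_l)\phi_j$ in $L_{0j}$ and $x\phid_i$ (with its accompanying $k$-string) in $L_{i0}$, together with a complementary shift in the diagonal entries $L_{ii} = (x-vk_i)\prod_{l>i} k_l$ of the bosonic block. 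Composing the homomorphism with these twists yields \eqref{eq:Lmat}.

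The main obstacle, as I see it, is to show that the combined $(u,v)$-twist is compatible with the Yang--Baxter algebra, \ie\ that it preserves \eqref{rll-eq} rather than only its $u=v=0$ specialisation. Equivalently one must exhibit these twists as a genuine (coproduct-compatible) twist of the underlying Hopf algebra, so that applying them to both sides of Jimbo's RLL identity is legitimate. An informative consistency check is the rank-one case: for $r=1$ and $u=-1$ one should recover Povolotsky's $L$-matrix \cite{Povo13}, while for $u=v=0$ one must recover the $L$-matrix of \cite{CantinidGW}; both reductions are immediate from \eqref{eq:Lmat}, so the homomorphism must interpolate correctly between them. If the abstract twist picture becomes cumbersome, a fall-back is direct verification: \eqref{rll-eq} reduces to a finite list of bilinear identities indexed by quadruples of row and column labels, and using the block structure (the distinguished role of the index $0$) together with the symmetry between the $i>j$ and $i<j$ off-diagonal entries, the list collapses to a small number of short computations using only \eqref{t-bos}.
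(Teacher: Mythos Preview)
Your high-level strategy --- start from Jimbo's $L$-operator for $U_{t^{1/2}}(A_r^{(1)})$ and push it through an algebra homomorphism into $\bigotimes_i \mathcal{B}_i$ --- is precisely the paper's. The substantive difference is in how $u$ and $v$ enter, and the paper's treatment dissolves the very obstacle you single out.

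In the paper, $u$ and $v$ are \emph{not} introduced as post-hoc deformations of the $u=v=0$ bosonic solution. The paper works instead with Jimbo's two-parameter operator $\mathcal{L}(x,y)$ and with a mild enlargement $\mathfrak{g}=U'_{\tau}(sl_{r+1})$ of the quantum group (with $\tau^2=t$) that carries extra Cartan elements $\kappa_i$ and a central element $c=\kappa_0\cdots\kappa_r$. The oscillator homomorphism $\eta:\mathfrak{g}\to\bigotimes_i\mathcal{B}_i$ is written on this enlarged algebra; the ``extra'' piece $\kappa_0$ and the auxiliary symbols $h_0,a_0,a_0^\dagger$ are realised in terms of the $r$ genuine bosons and the scalar $c$. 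After applying $\eta$, one simply sets $y=v$ and $c=uv$, rescales the $0$-th row and column by scalars, and normalises; these last moves are harmless gauge operations compatible with the weight conservation of the $R$-matrix. Thus both parameters are already present in Jimbo's construction --- one as a spectral parameter, the other as the value of a central element --- and no separate $(u,v)$-compatibility argument is needed. The twist that \emph{does} appear is a Drinfeld twist $F=\tau^{-\sum_{j<i}\epsilon_i\otimes\epsilon_j}$, whose role is to turn Jimbo's $R$-matrix into the stochastic form \eqref{Rmat-def}; being a genuine Hopf-algebra twist it preserves the $RLL$ relation automatically. So your instinct that a coproduct-compatible twist is the right tool is correct, but it is deployed to fix the gauge of the $R$-matrix, not to create the $(u,v)$-deformation. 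Your fall-back of direct verification would also succeed, but the paper's route explains structurally why the two extra parameters fit.
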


\begin{proof}
The proof of Theorem~\ref{rll} will be deferred to Section~\ref{proofrll} in order to not interrupt the flow of the paper.
\end{proof}

\begin{ex}
\label{ex-rank1}
{\rm
In the case $r=1$, the $\check{R}$ and $L$-matrices are given by
\begin{align*}
\check{R}(z)
&=
\left(
\begin{array}{cc|cc}
1 & 0 & 0 & 0
\\
0 & c_{-} & b_{-} & 0
\\
\hline
0 & b_{+} & c_{+} & 0
\\
0 & 0 & 0 & 1
\end{array}
\right),
\quad
L(x)
=
\begin{pmatrix}
1-x u k & (1-uv k) \phi
\\
x \phid & x-v k
\end{pmatrix},
\end{align*}
where we have omitted the bosonic subscripts, given that only one family is present. In this case the $R$-matrix is that of the stochastic six-vertex model, and the $L$-matrix has appeared in various forms in the literature \cite{Povo13,Mang14a,Mang14b,Borodin,CorwinP,BorodinP}. Here we adopt an operatorial version of the entries and include two deformation parameters $u,v$ (for example, one sets $u=v=s$ to recover the $L$-matrix of \cite{Borodin,BorodinP}).
}
\end{ex}

\begin{ex}
\label{ex:rank2}
{\rm
In the case $r=2$, the $\check{R}$ and $L$-matrices are given by
\begin{align*}
\check{R}(z)
=
\left(
\begin{array}{ccc|ccc|ccc}
1 & 0 & 0 & 0 & 0 & 0 & 0 & 0 & 0
\\
0 & c_{-} & 0 & b_{-} & 0 & 0 & 0 & 0 & 0
\\
0 & 0 & c_{-} & 0 & 0 & 0 & b_{-} & 0 & 0
\\
\hline
0 & b_{+} & 0 & c_{+} & 0 & 0 & 0 & 0 & 0
\\
0 & 0 & 0 & 0 & 1 & 0 & 0 & 0 & 0
\\
0 & 0 & 0 & 0 & 0 & c_{-} & 0 & b_{-} & 0
\\
\hline
0 & 0 & b_{+} & 0 & 0 & 0 & c_{+} & 0 & 0
\\
0 & 0 & 0 & 0 & 0 & b_{+} & 0 & c_{+} & 0
\\
0 & 0 & 0 & 0 & 0 & 0 & 0 & 0 & 1
\end{array}
\right),
\end{align*}
\begin{align*}
L(x)
=
\begin{pmatrix}
1-x u k_1 k_2 & (1-uv k_1 k_2) \phi_1 & (1-uv k_1 k_2) \phi_2
\\
x k_2 \phid_1 & (x-v k_1)k_2 & v k_2 \phid_1 \phi_2
\\
x \phid_2 & x \phid_2 \phi_1 & x - v k_2
\end{pmatrix}.
\end{align*}
}
\end{ex}

\section{A new class of symmetric polynomials}
\label{se:poly}

In Section~\ref{se:mpa} we gave a general construction of families of polynomials $\{f_{\mu}\}_{\mu^{+} = \lambda}$ which obey \eqref{eq:exchange1}--\eqref{eq:exchange2}, given a solution $L(x)$ of the Yang--Baxter algebra. We now apply this formalism to a specific example -- namely, to the $L$-matrices studied in Section \ref{se:new-L}, in order to obtain our new classes of polynomials $f_{\mu}$ and $P_{\lambda}$ as explicit matrix products.

\subsection{Matrix product expression for $f_{\mu}$}
\label{sse:f}

Define, as in equation \eqref{monodromy}, a monodromy matrix $T(x)$ whose constituent $L$-matrices are given by \eqref{eq:Lmat}. From this, construct a solution to the ZF algebra \eqref{eq:ZFdef}, by extracting the first column of $T(x)$:
\begin{align}
\label{ZF-ops}
\mathbb{A}(x) = (A_0(x),A_1(x), \ldots, A_r(x))^{\rm T} = (T_{00}(x),T_{10}(x),\ldots,T_{r0}(x))^{\rm T}.
\end{align}
For any composition $\mu$ with largest part $r=\mu_1^+$ we then define
\begin{align}
\label{mat-prod}
f_\mu(x_1,\ldots,x_n;q,t;u,v) := \rho \left( A_{\mu_1}(x_1) \cdots A_{\mu_n}(x_n) \mathbb{S} \right),
\end{align}
where the linear form $\rho$ will be given below. Here we work in slightly greater generality than in Section~\ref{se:mpa}, and aside from parameters $t,u,v$ which enter via $L(x)$, we also allow for an additional parameter $q$ which is incorporated via a twist matrix $\mathbb{S}$: 
\begin{align}
\label{Sdef} 
\mathbb{S}=S^{(1)} \cdots S^{(r)},
\qquad 
S^{(i)}= \left( \prod_{j=i+1}^{r} k_{j}^{(j-i)\alpha}\right)^{(i)}
\quad
\text{where}\ \ t^{\alpha} \equiv q.
\end{align} 
By their very construction, the $f_\mu$ obey the quantum Knizhnik--Zamolodchikov equations \eqref{eq:exchange1}--\eqref{eq:exchange2}.

It remains to specify the linear form. We let $\ket{m}_i^{(j)}$ denote a basis state in the Fock space corresponding to $\mathcal{B}_i^{(j)} = \langle \phi_i^{(j)},\phi^{\dagger(j)}_i,k_i^{(j)} \rangle$. We use abbreviated notation for a sum over all basis states: $\ket{\theta}_i^{(j)} = \sum_{m=0}^{\infty} \ket{m}_i^{(j)}$. The linear form $\rho$ is defined in the following way:
\begin{enumerate}
\item Trace over the Fock representation of all algebras $\mathcal{B}_i^{(j)}$ such that $i>j$.

\item Sandwich between vacuum states $\bra{0}_i^{(j)}$ and $\ket{0}_i^{(j)}$ for all algebras $\mathcal{B}_i^{(j)}$ such that $i<j$. 

\item Sandwich between the states $\bra{\theta}_i^{(i)}$ and $\ket{0}_i^{(i)}$ for all algebras 
$\mathcal{B}_i^{(i)}$.
\end{enumerate}
More succinctly, the form can be written as
\begin{multline}
\label{linearform}
\rho \left( A_{\mu_1}(x_1) \cdots A_{\mu_n}(x_n) \mathbb{S}\right) 
:= 
\\
\prod_{i=1}^{r}
\bra{0_1,\dots,0_{i-1},\theta_i}^{(i)}
\Tr \Big[ A_{\mu_1}(x_1) \cdots A_{\mu_n}(x_n) \mathbb{S} \Big]^{(i)}_{(i,\dots,r]}
\ket{0_1,\dots,0_i}^{(i)}.
\end{multline}
Although the $\mathcal{B}_i^{(i)}$ algebras seemingly bring considerable complication to \eqref{linearform}, it is not hard to show that only a single term in the sum $\sum_{m=0}^{\infty} \bra{m}_i^{(i)}$ survives -- namely, the term $m = m_i(\mu)$, where $m_i(\mu)$ is the part-multiplicity function:
\begin{align*}
m_i(\mu)
=
\#\{ \mu_k : \mu_k = i \}.
\end{align*}
This allows us to write, equivalently,
\begin{multline}
\label{linearform2}
\rho \left( A_{\mu_1}(x_1) \cdots A_{\mu_n}(x_n) \mathbb{S}\right) 
= 
\\
\prod_{i=1}^{r}
\bra{0_1,\dots,0_{i-1},m_i(\mu)}^{(i)}
\Tr \Big[ A_{\mu_1}(x_1) \cdots A_{\mu_n}(x_n) \mathbb{S} \Big]^{(i)}_{(i,\dots,r]}
\ket{0_1,\dots,0_i}^{(i)}.
\end{multline}
Although \eqref{linearform2} is manifestly simpler, \eqref{linearform} is preferable as the definition of the linear form $\rho$, since it does not depend on $\mu$.

At this stage, it is by no means obvious that this particular linear form is the best choice available. The main reason that we adopt it is that it succeeds -- it ultimately leads to a family of polynomials which simultaneously generalise both those of Macdonald and those of Borodin--Petrov -- a fact that will be borne out below. For now, let us only remark that the most natural choice for $\rho$, which would be to trace all $r^2$ bosons that appear, causes $f_{\mu}$ to vanish for all non-zero compositions $\mu$.  

\subsection{Matrix product expression for $P_{\lambda}$}
\label{se:P}

Following the procedure in Section \ref{se:symmpol}, we now obtain symmetric polynomials $P_{\lambda}$ by summing over all $\mu$ which are permutations of $\mu^{+} \equiv \lambda$. Summing over \eqref{linearform2}, we obtain
\begin{multline}
\label{symmP}
P_{\lambda}(x_1,\dots,x_n;q,t;u,v)
=
\\
\Omega_{\lambda}(q,t)
\times
\prod_{i=1}^{r}
\bra{0_1,\dots,0_{i-1},m_i(\lambda)}^{(i)}
\Tr \Big[ A(x_1) \cdots A(x_n) \mathbb{S} \Big]^{(i)}_{(i,\dots,r]}
\ket{0_1,\dots,0_i}^{(i)}
\end{multline}
where we have defined
\begin{align*}
A(x)=\sum_{j=0}^r A_j(x),
\end{align*}
and $\Omega_{\lambda}(q,t)$ is an introduced overall normalisation to be given in the next subsection. It is apparent that the product $A(x_1) \cdots A(x_n)$ gives rise to terms $A_{\mu_1}(x_1) \cdots A_{\mu_n}(x_n)$ for {\it all} compositions $\mu$ contained in the $n \times r$ rectangle, and not just those for which $\mu^{+} = \lambda$. However, any $\mu$ for which $\mu^{+} \not= \lambda$ gives a vanishing contribution to \eqref{symmP}. Note that $P_\lambda$ written in the form \eqref{symmP} is manifestly symmetric because $[A(x),A(y)]=0$, which follows from left-multiplying the ZF equation \eqref{eq:ZFdef} with the row vector $(1,1,\ldots,1)$ and the fact that the columns of the $\check{R}$-matrix add up to $1$.

Equation \eqref{symmP} is the main result of the paper. It is a completely explicit formula for the new family of symmetric polynomials $P_{\lambda}(x_1,\dots,x_n;q,t;u,v)$, whose specialisations will be explored in the coming sections.

\subsection{Specifying the normalisation}

Equation \eqref{symmP} contains a normalising factor that we can freely choose, without spoiling the symmetry in $(x_1,\dots,x_n)$. Bearing in mind that we will subsequently specialise \eqref{symmP} to Macdonald polynomials, we define 
\begin{align*}
\Omega_{\lambda}(q,t)
=
\prod_{1 \leq i<j \leq r}
\left(
1-q^{j-i} t^{\lambda'_i-\lambda'_j} 
\right),
\end{align*}
where $\lambda'$ denotes the {\it conjugate partition} of $\lambda$:
\begin{align*}
\lambda'_i - \lambda'_{i+1}= m_{i}(\lambda),
\quad\quad
\forall\ i \geq 1.
\end{align*}
This is the same normalisation as was used in \cite{CantinidGW}.

\subsection{A polynomial example}
\label{se:polexample}
We look at an explicit example for rank 2. Using \eqref{monodromy} we construct a solution of the ZF algebra by taking the first column of the monodromy matrix defined by
\begin{align*}
T(x) = L^{(1)}(x)\cdot L^{(2)}(x),
\end{align*}
where each $L$-matrix is a copy of the $3\times 3$ rank 2 matrix of Example~\ref{ex:rank2}. We thus have
\begin{align}
\label{exampleA}
\mathbb{A}(x)=\begin{pmatrix}
A_0(x)\\
A_1(x)\\
A_2(x)
\end{pmatrix}
=
\begin{pmatrix}
1-x u k_1 k_2 & (1-uv k_1 k_2) \phi_1 & (1-uv k_1 k_2) \phi_2
\\
x k_2 \phid_1 & (x-v k_1)k_2 & v k_2 \phid_1 \phi_2
\\
x \phid_2 & x \phid_2 \phi_1 & x - v k_2
\end{pmatrix}^{(1)}\cdot
\begin{pmatrix}
1-x u k_1 k_2
\\
x k_2 \phid_1
\\
x \phid_2
\end{pmatrix}^{(2)},
\end{align}
where we only wrote the first column of $L^{(2)}(x)$. We now explain the matrix product form for $\mu=(2,1)$, with $m_1(\mu)=m_2(\mu)=1$.

There are four boson families $\mathcal{B}_i^{(j)}$ with $i,j \in \{1,2\}$. According to the prescription \eqref{linearform2}, the two diagonal families $\mathcal{B}_i^{(i)}$ should be sandwiched between $\bra{m_i(\mu)}$ and $\ket{0}$, which for both $i=1$ and $i=2$ results in sandwiching between $\bra{1}$ and $\ket{0}$. The family $\mathcal{B}_1^{(2)}$ is sandwiched between $\bra{0}$ and $\ket{0}$ and the fourth family $\mathcal{B}_2^{(1)}$ will be traced over. Hence we define
\begin{multline}
f_{21}(x_1,x_2;q,t;u,v) :=\rho \big( A_2(x_1)A_1(x_2) \mathbb{S}\big) \\
= \bra{1_1}^{(1)} \bra{0_1,1_2}^{(2)}\ \Tr \Big[ A_2(x_1)A_1(x_2) \mathbb{S} \Big]_2^{(1)}\ \ket{0_1}^{(1)} \ket{0_1,0_2}^{(2)}.
\end{multline}
Note now that \eqref{exampleA} only contains the creation operator $\phi_1^{\dag(2)}$ of $\mathcal{B}_1^{(2)}$ and not the annihilation operator $\phi_1^{(2)}$. As $\mathcal{B}_1^{(2)}$ is sandwiched between $\bra{0}$ and $\ket{0}$, the nonzero remaining terms are those not containing $\phi_1^{\dag(2)}$. In other words, we can set $\phi_1^{\dag(2)}=0$ and $k_1^{(2)}=1$ in \eqref{exampleA}.

To ease notation we call the diagonal families $\mathcal{B}_i^{(i)}=\langle a_i,a_i^{\dag},\kappa_i \rangle$ and drop the indices from the remaining family $\mathcal{B}_2^{(1)}$, and find after projecting out the family $\mathcal{B}_1^{(2)}$ that we have
\begin{align}
\label{examplef2}
f_{21}(x_1,x_2;q,t;u,v)
= \bra{1_1} \bra{1_2} \ \Tr_\phi \Big[ \tilde{A}_2(x_1)\tilde{A}_1(x_2) \mathbb{S} \Big] \ \ket{0_1} \ket{0_2},
\end{align}
where $\tilde{A}_i(x)$ are determined from
\begin{multline}
\begin{pmatrix}
\tilde{A}_0(x)\\
\tilde{A}_1(x)\\
\tilde{A}_2(x)
\end{pmatrix}=
\begin{pmatrix} 1-x u \kappa_1 k & (1-uv \kappa_1 k) \phi \\ x k a_1^\dag & vk a_1^\dag \phi \\ x \phid & x-vk \end{pmatrix} 
\cdot 
\begin{pmatrix} 1-xu \kappa_2 \\ x a_2^{\dag} \end{pmatrix} 
\\
=
\begin{pmatrix} (1-x u \kappa_1 k)(1-xu \kappa_2) + x (1-uv \kappa_1 k) \phi a_2^{\dag}\\
x k a_1^{\dag} \big( 1-xu \kappa_2 + v \phi a_2^{\dag} \big) \\
x \big( \phid (1-xu \kappa_2) + (x-vk) a_2^{\dag}\big) \end{pmatrix}.
\label{eq:ZFr=3}
\end{multline}
The projection of the $a$ bosons in \eqref{examplef2} implies that we only need to collect terms in the product $\tilde{A}_2(x_1)\tilde{A}_1(x_2)$ that are proportional to $a_1^{\dag}a_2^{\dag}$, as other terms project to zero in the bra-ket between $\bra{1_1}\bra{1_2}$ and $\ket{0_1}\ket{0_2} $. The surviving terms are
\begin{align*}
x_1x_2 \Tr_\phi \big[ \big( v \phid (1-x_1u \kappa_2) k a_1^{\dag} \phi a_2^{\dag} + (x_1-vk) a_2^{\dag} k a_1^{\dag} (1-x_2 u \kappa_2 ) \big) k^\alpha \big] ,
\end{align*}
where we also used the definition \eqref{Sdef} of the twist $\mathbb{S}$.

The next step is to order both $\adag$ bosons to the left and pair up $\phid$ and $\phi$. This will result in some additional factors of $t$ due to commutation relations between $a^\dag$ and $\kappa$, and between $\phi$ and $k$. For this example, after projecting out the $\adag$ bosons we arrive at
\begin{align*}
f_{21}(x_1,x_2;q,t; u,v) &= x_1x_2 \Tr_\phi \big[ \big( v t^{-1} \phid \phi k (1-x_1u t) + (x_1-vk) k (1-x_2 u ) \big) k^{\alpha} \big] \nonumber\\
&= x_1x_2 \Tr_\phi \big[ \big( v t^{-1} (1-k) k (1-x_1u t ) + (x_1-vk) k (1-x_2 u ) \big) k^{\alpha} \big]\nonumber \\
&= x_1x_2 \Tr_\phi \big[ v t^{-1} (1-k-tk)k^{1+\alpha} + x_1( 1-uv(1-k) )k^{1+\alpha} + x_2 uv k^{2+\alpha} - x_1x_2 uk^{1+\alpha} \big].
\end{align*}
The traces can now be simply evaluated because $\Tr_\phi k^\beta = (1-t^\beta)^{-1}$, resulting in
\begin{multline}
f_{21}(x_1,x_2;q,t; u,v)= 
\\
x_1x_2\left( -\frac{v(1-q)}{(1-qt)(1-qt^2)} +\frac{1-qt^2 -uvqt(1-t)}{(1-qt)(1-qt^2)} x_1 
+  \frac{uv}{1-qt^2}x_2 - \frac{u}{1-qt}x_1x_2\right).
\end{multline}

Likewise we compute
\begin{align*}
f_{12}(x_1,x_2;q,t;u,v) &=\rho \big( A_1(x_1)A_2(x_2) \mathbb{S}\big),
\end{align*}
where again we need to consider only the terms proportional to $\phi_1^{\dag(1)}\phi_2^{\dag(2)} \equiv\adag_1\adag_2$, which in this case are given by
\begin{align*}
x_1x_2 \Tr_\phi \big[ \big( v k \adag_1 \phi \adag_2 \phid (1-x_2u \kappa_2) + k (1-x_1 u \kappa_2) (x_2-vk) \adag_1 \adag_2 \big) k^\alpha \big] .
\end{align*}
Reordering and projecting out the $\adag$ bosons we get in this case
\begin{align*}
f_{12}(x_1,x_2;q,t;u,v) &= x_1x_2 \Tr_\phi \big[ \big( v k \phi \phid (1-x_2u) + k (1-x_1 ut) (x_2-vk) \big) k^\alpha \big] . \nonumber\\
&=x_1x_2 \Tr_\phi \big[ \big( v k (1-tk) (1-x_2u) + k (1-x_1 ut) (x_2-vk) \big) k^\alpha \big] \nonumber \\
&= x_1x_2 \Tr_\phi \big[ v (1-tk-k)k^{1+\alpha} + x_1 uvt k^{2+\alpha} + x_2 (1-uv(1-tk)) k^{1+\alpha} - x_1x_2 ut k^{1+\alpha} \big].
\end{align*}
After taking the traces we end up with
\begin{multline}
f_{12}(x_1,x_2;q,t;u,v)=
\\
x_1x_2\left( -\frac{v(1-q)t}{(1-qt)(1-qt^2)} + \frac{uv t}{1-qt^2}x_1
+ \frac{1-qt^2-uv(1-t)}{(1-qt)(1-qt^2)} x_2  - \frac{ut}{1-qt}x_1x_2\right).
\end{multline}
Finally, the symmetric polynomial $P_{21}=\Omega_{21}(f_{21}+f_{12})$, where
$ \Omega_{21}=1-qt$, is equal to
\begin{align}
P_{21}(x_1,x_2;q,t;u,v) 
= 
\left(1- \frac{uv(1-q)t}{1-qt^2}  \right)
(x_1^2x_2+x_1x_2^2)
-
(1+t) 
\left( \frac{v(1-q)}{1-qt^2} x_1x_2 +u x_1^2x_2^2\right).
\end{align}

\section{Specialisation to Macdonald polynomials}
\label{se:mac}

In this section we present the first main property of the polynomials $P_{\lambda}(x_1,\dots,x_n;q,t;u,v)$ -- their reduction to Macdonald polynomials when $u=v=0$. We begin with some preliminary simplifying observations. 

\subsection{A simplification of the matrix product \eqref{linearform2}}
\label{se:simp1}

\begin{lem}
\label{aux-lem}
{\rm
Let $\mathcal{M} \in \bigotimes_{i,j=1}^{r} \mathcal{B}_i^{(j)}$ be any monomial in the expansion of $A_{\mu_1}(x_1) \cdots A_{\mu_n}(x_n)$, where the operators $A_{\mu_i}(x_i)$ are given by \eqref{ZF-ops}. Then if the annihilation operator $\phi_i^{(j)}$ appears in $\mathcal{M}$, a creation operator $\phi_i^{\dagger(\ell)}$ must also be present in $\mathcal{M}$, for some $\ell > j$.
}
\end{lem}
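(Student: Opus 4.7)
The plan is to expand each monodromy entry $A_{\mu_p}(x_p) = T_{\mu_p,0}(x_p)$ as a sum over admissible index sequences through the factors $L^{(1)},\ldots,L^{(r)}$, and then to trace the life of the subscript $i$ along a single such sequence.

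As a preliminary step I would catalogue, for each $i\in\{1,\ldots,r\}$, exactly which entries of the $L$-matrix \eqref{eq:Lmat} contain $\phi_i$ versus $\phi_i^{\dagger}$. A direct reading of the four possible entry types (both indices zero, equal nonzero indices, exactly one zero index, and two distinct nonzero indices) yields the clean dichotomy: (a) $\phi_i^{(j)}$ occurs in $L^{(j)}_{ab}$ if and only if $b=i$ and $a\neq i$; and dually (b) $\phi_i^{\dagger(j)}$ occurs in $L^{(j)}_{ab}$ if and only if $a=i$ and $b\neq i$.

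Next I expand
\[
T_{\mu_p,0}(x_p) \;=\; \sum L^{(1)}_{s_0, s_1}(x_p)\cdots L^{(r)}_{s_{r-1}, s_r}(x_p)
\]
over all sequences $(s_0,\ldots,s_r)$ with boundary values $s_0=\mu_p$ and $s_r=0$. A monomial $\mathcal{M}$ in $A_{\mu_1}(x_1)\cdots A_{\mu_n}(x_n)$ is then labelled by a choice of one such sequence $(s^{(p)}_0,\ldots,s^{(p)}_r)$ for each particle $p\in\{1,\ldots,n\}$. Suppose $\phi_i^{(j)}$ appears in $\mathcal{M}$. By criterion (a) there is some $p$ with $s^{(p)}_j = i$ and $s^{(p)}_{j-1}\neq i$. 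Because $i\geq 1$ while $s^{(p)}_r=0$, the set $\{\ell : j<\ell\leq r,\ s^{(p)}_\ell\neq i\}$ is non-empty, and its minimum $\ell$ satisfies $s^{(p)}_{\ell-1}=i$ and $s^{(p)}_\ell\neq i$. Criterion (b) then guarantees that the factor $L^{(\ell)}_{s^{(p)}_{\ell-1},\,s^{(p)}_\ell}(x_p)$ contributes a creation operator $\phi_i^{\dagger(\ell)}$, which therefore appears in $\mathcal{M}$ with $\ell>j$, as required.

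The argument is purely combinatorial, and the only genuine step requiring care is the case-by-case verification of the classification (a)/(b) from the explicit formulae in \eqref{eq:Lmat}; once this is recorded, the path-tracking conclusion is immediate from the observation that any trajectory that visits the node $i$ must later leave it in order to end at $0$. I do not foresee any real obstacle beyond organising this bookkeeping.
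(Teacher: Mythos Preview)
Your proposal is correct and follows essentially the same approach as the paper: both arguments identify the index sequence $(s_0,\ldots,s_r)$ through the product $L^{(1)}\cdots L^{(r)}$, observe that $\phi_i^{(j)}$ forces $s_j=i$, and then use the boundary condition $s_r=0$ to conclude that the sequence must eventually leave the value $i$ at some $\ell>j$, producing $\phi_i^{\dagger(\ell)}$. Your formulation via the explicit dichotomy (a)/(b) and the first-exit-time $\ell$ is a cleaner packaging of exactly the iterative reasoning the paper gives.
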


\begin{proof}
If $\phi_i^{(j)}$ appears in $\mathcal{M}$, it must have come from component $(i',i)$ of an $L$-matrix $L^{(j)}(x_{a})$, for some $0 \leq i' \not= i \leq r$ and $1 \leq a \leq n$. Let us denote this component by $L^{(j)}_{i'i}(x_a)$. Given that this component is selected, the component $L^{(j+1)}_{ii''}(x_a)$ must also be present, for some $0 \leq i'' \leq r$. If $i''\not=i$, $L^{(j+1)}_{ii''}(x_a)$ gives rise to the boson $\phi_i^{\dagger(j+1)}$, and the result follows. If $i''=i$, the boson $\phi_i^{\dagger(j+1)}$ is not produced, but we can iterate this reasoning to the next $L$-matrix in the product. If we need to iterate all the way to the final $L$-matrix in the product, the component $L^{(r)}_{i0}(x_a)$ will arise. This produces the boson $\phi_i^{\dagger(r)}$, since we know that $i\not=0$.
\end{proof}

We analyse more closely the expression \eqref{linearform2} for $f_{\mu}$, focusing on its dependence on the algebras $\mathcal{B}_i^{(1)},\dots,\mathcal{B}_i^{(r)}$. It is helpful to construct a two-dimensional visualisation of the matrix product, showing only that part of $\rho$ which acts on the $i^{\rm th}$ boson families $\mathcal{B}_i^{(1)},\dots,\mathcal{B}_i^{(r)}$:
\begin{figure}[H]
\begin{tikzpicture}[scale=1.1,baseline=2cm]
\foreach\x in {0,...,6}{
\draw[thick] (\x,0) -- (\x,4);
}
\foreach\y in {0,...,4}{
\draw[thick] (0,\y) -- (6,\y);
}
\draw (0,3.5) -- (-0.5,3.5); \node[text centered] at (-0.7,3.5) {\scriptsize $\mu_1$};
\draw (0,0.5) -- (-0.5,0.5); \node[text centered] at (-0.7,0.5) {\scriptsize $\mu_n$};
\draw (6,3.5) -- (6.5,3.5); \node[text centered] at (6.6,3.5) {\scriptsize $0$};
\draw (6,0.5) -- (6.5,0.5); \node[text centered] at (6.6,0.5) {\scriptsize $0$};
\node[text centered] at (0.5,0.5) {\tiny $L^{(1)}(x_n)$};
\node[text centered] at (0.5,3.5) {\tiny $L^{(1)}(x_1)$};
\node[text centered] at (3.5,0.5) {\tiny $L^{(i)}(x_n)$};
\node[text centered] at (3.5,3.5) {\tiny $L^{(i)}(x_1)$};
\node[text centered] at (5.5,0.5) {\tiny $L^{(r)}(x_n)$};
\node[text centered] at (5.5,3.5) {\tiny $L^{(r)}(x_1)$};
\draw (0.5,-0.5) -- (0.5,0); \node[text centered] at (0.5,-0.75) {\scriptsize Tr}; \draw (0.5,4.5) -- (0.5,4); \node at (0.5,-0.5) {\scriptsize $\bullet$}; \node at (0.5,4.5) {\scriptsize $\bullet$};
\draw (1.5,-0.5) -- (1.5,0); \node[text centered] at (1.5,-0.75) {\scriptsize Tr}; \draw (1.5,4.5) -- (1.5,4); \node at (1.5,-0.5) {\scriptsize $\bullet$}; \node at (1.5,4.5) {\scriptsize $\bullet$};
\draw (2.5,-0.5) -- (2.5,0); \node[text centered] at (2.5,-0.75) {\scriptsize Tr}; \draw (2.5,4.5) -- (2.5,4); \node at (2.5,-0.5) {\scriptsize $\bullet$}; \node at (2.5,4.5) {\scriptsize $\bullet$};
\draw (3.5,-0.5) -- (3.5,0); \node[text centered] at (3.5,-0.75) {\scriptsize $\ket{0}$}; \node[text centered] at (3.5,4.75) {\scriptsize $\bra{m_i(\mu)}$}; \draw (3.5,4.5) -- (3.5,4);
\draw (4.5,-0.5) -- (4.5,0); \node[text centered] at (4.5,-0.75) {\scriptsize $\ket{0}$}; \node[text centered] at (4.5,4.75) {\scriptsize $\bra{0}$}; \draw (4.5,4.5) -- (4.5,4);
\draw (5.5,-0.5) -- (5.5,0); \node[text centered] at (5.5,-0.75) {\scriptsize $\ket{0}$}; \node[text centered] at (5.5,4.75) {\scriptsize $\bra{0}$}; \draw (5.5,4.5) -- (5.5,4);
\filldraw[fill=white!90!black] (0,-0.2) -- (3,-0.2) -- (3,-0.3) -- (0,-0.3) -- (0,-0.2);
\node[text centered] at (-0.3,-0.25) {\scriptsize $\mathbb{S}$};
\end{tikzpicture}
\captionsetup{labelformat=empty}
\caption{
Lattice representation of the matrix product \eqref{linearform2}, where for simplicity we only show the action of the linear form $\rho$ on the $i^{\rm th}$ $t$-boson families  $\mathcal{B}_i^{(j)}\ (j=1,\ldots,r)$, {\it i.e.}
$
\bra{m_i(\mu)}^{(i)}_i
\bra{0}^{(i+1)\dots (r)}_i
{\rm Tr}\left[
A_{\mu_1}(x_1)
\cdots
A_{\mu_n}(x_n)
\mathbb{S}
\right]^{(1)\dots (i-1)}_i
\ket{0}^{(i) \dots (r)}_i
$.
The full action of $\rho$ is obtained by taking a product over all $i=1,\dots,r$. 
}
\end{figure}

Each square of the lattice represents the contribution from a single $L$-matrix in the matrix product \eqref{linearform2}. In all columns $j \not= i$ we must have an equal number of annihilation and creation operators present, otherwise the resulting algebraic monomial will vanish under the action of $\rho$. In other words, we require that $\#(\phi_i^{(j)}) = \#(\phi_i^{\dagger(j)})$ in all columns $j \not= i$. 

This implies, in particular, that $\#(\phi_i^{(j)}) = \#(\phi_i^{\dagger(j)}) = 0$ in all columns $j>i$. Indeed, let us suppose that $j>i$ is the largest value such that $\#(\phi_i^{(j)}) = \#(\phi_i^{\dagger(j)}) \geq 1$, which means that necessarily $\#(\phi_i^{(\ell)}) = \#(\phi_i^{\dagger(\ell)}) = 0$ for all $\ell >j$. Given that 
$\phi_i^{(j)}$ appears, Lemma \ref{aux-lem} tells us that $\phi_i^{\dagger(\ell)}$ must appear for some $\ell > j$, leading to an immediate contradiction. We conclude that there is no value $j>i$ such that $\#(\phi_i^{(j)}) = \#(\phi_i^{\dagger(j)}) \geq 1$.

It follows that we are able to substitute $\phi_i^{(j)} \mapsto 0$, $\phi_i^{\dagger(j)} \mapsto 0$, $k_i^{(j)} \mapsto 1$ in \eqref{linearform2}, for all $i<j$, leaving it invariant. Such a substitution causes many entries of the participating $L$-matrices to vanish, greatly reducing the complexity of \eqref{linearform2}. 

\subsection{The $u=v=0$ case of \eqref{linearform2}}
\label{se:simp2}

The specialisation $u=v=0$ is a further, great simplification of \eqref{linearform2}. One easily shows that, after setting $\phi_i^{(j)} \mapsto 0$, $\phi_i^{\dagger(j)} \mapsto 0$, $k_i^{(j)} \mapsto 1$ for all $i<j$ and $u=v=0$, it is impossible for $\phi_i^{(i)}$ or $k_i^{(i)}$ to appear in \eqref{linearform2} for all $1\leq i \leq r$. It follows that $\phi_i^{\dagger(i)}$ appears exactly $m_i(\mu)$ times, and the expectation value $\bra{m_i(\mu)}^{(i)}_i (\phi_i^{\dagger(i)} )^{m_i(\mu)} \ket{0}_i^{(i)} = 1$ is effectively a common factor of $\eqref{linearform2}$.

Therefore, when $u=v=0$, we can additionally substitute $\phi_i^{\dagger(i)} \mapsto 1$ for all $1\leq i \leq r$ and omit the expectation value $\bra{m_i(\mu)}^{(i)}_i \cdots \ket{0}_i^{(i)}$ from the linear form $\rho$.

\subsection{Equivalence with matrix product formula of \cite{CantinidGW} at $u=v=0$}

An explicit matrix product expression for the Macdonald polynomials was given in \cite{CantinidGW}, using essentially the same method as in Sections \ref{sse:f} and \ref{se:P}. The $L$-matrix used in \cite{CantinidGW} is exactly the same as \eqref{eq:Lmat} with $u=v=0$, as can be easily checked.

There is however a change in notation between the presentation here and \cite{CantinidGW}. The construction of solutions to the ZF algebra in \cite{CantinidGW} makes use of a rank-reducing mechanism, whereby a monodromy matrix $T(x)=\tilde{L}^{(1)}(x)\cdots \tilde{L}^{(r)}(x)$ is written as in \eqref{monodromy} but with $\tilde{L}^{(i)}(x)$ an $(r-i+2)\times (r-i+1)$ rectangular matrix, whereas \eqref{monodromy} uses only the square $(r+1)$-dimensional $L$-matrix. 

The equivalence of the two approaches is based on the simplifications listed in Sections \ref{se:simp1} and \ref{se:simp2}. After performing the substitutions stated therein, one finds that the $L$-matrices in \eqref{linearform2} contain many redundant entries, which only give a vanishing contribution to $f_{\mu}(x_1,\dots,x_n;q,t;0,0)$. After suppressing these entries (which amounts to deleting rows and columns from the $L$-matrices), one arrives precisely at the rectangular $L$-matrices $\tilde{L}^{(i)}(x)$ of \cite{CantinidGW}. We have already seen, for instance, that \eqref{exampleA} reduces to \eqref{eq:ZFr=3} in the example of Section~\ref{se:polexample}. A general proof is elementary but tedious to explain in detail, so we will not elaborate further. 

We conclude that, at $u=v=0$, the matrix product \eqref{linearform2} recovers the family of non-symmetric polynomials $\{ f_{\mu}(x_1,\dots,x_n;q,t) \}_{\mu^{+} = \lambda}$ studied in \cite{CantinidGW}. Since the latter produce symmetric Macdonald polynomials $P_{\lambda}(x_1,\dots,x_n;q,t)$ by summing over all $\mu$ in the Weyl orbit of $\lambda$, we find that
\begin{align*}
P_{\lambda}(x_1,\dots,x_n;q,t;0,0)
=
P_{\lambda}(x_1,\dots,x_n;q,t)
\end{align*}
which we had set out to demonstrate.

\section{Specialisation to Borodin--Petrov rational symmetric functions}
\label{se:bor-pet}

In the recent papers \cite{Borodin,BorodinP}, Borodin and Petrov have introduced a rational, inhomogeneous generalisation of Hall--Littlewood polynomials, $\textsf{F}_{\lambda}(x_1,\dots,x_n;t;s)$. This generalisation is achieved via the inclusion of an additional parameter, $s$, which can be considered to parametrise the spin of a vertex model\footnote{We will not work in the full generality considered in \cite{BorodinP}, where a separate spin parameter and quantum impurity was introduced at each site of the lattice, preferring to focus on the functions $\textsf{F}_{\lambda}(x_1,\dots,x_n;t;s)$ as they were introduced in \cite{Borodin}.}. Indeed, at the special values $s=t^{-\ell/2}$, $\ell \in \mathbb{N}$, the functions $\textsf{F}_{\lambda}(x_1,\dots,x_n;t;s)$ reduce precisely to the wavefunctions of a spin-$\ell/2$ XXZ chain. The Hall--Littlewood polynomials themselves are recovered at $s=0$, which can accordingly be viewed as the limit of infinite spin. 

One of the main results of this paper is that the polynomials $P_{\lambda}(x_1,\dots,x_n;q,t;u,v)$ degenerate to the Borodin--Petrov family when $q=0$. Before proving this, we first point out some minor differences in convention that we use, in comparison with \cite{Borodin}. {\bf 1.} The members of the family $P_{\lambda}(x_1,\dots,x_n;q,t;u,v)$ are polynomials, so after taking $q=0$ we will obtain polynomials, not rational functions. This discrepancy can be cured by a harmless normalising factor depending on all $x$ variables. {\bf 2.} $P_{\lambda}(x_1,\dots,x_n;q,t;u,v)$ contains two deformation parameters, $u,v$, rather than a single $s$. After taking $q=0$ we obtain a function in $u,v$, which then reduces to the Borodin--Petrov case after setting $u=v=s$. {\bf 3.} To correctly perform the reduction, a trivial shift of the indexing partition is necessary. This will be explained in more detail in Remark \ref{conventions} below.

\subsection{Borodin--Petrov family}

Following \cite{Borodin}, we construct polynomials $F_{\lambda}(x_1,\dots,x_n;t;u,v)$ directly from the rank-1 integrable model of Example \ref{ex-rank1}. Let us again write the $L$-matrix, this time placing a superscript $j$ on the bosonic operators, to indicate a copy $\mathcal{B}^{(j)}$ of the $t$-boson algebra\footnote{Since this is a rank-1 model, there is only one family of bosons. Hence there is no need to place subscripts on bosonic operators.}:
\begin{align}
\label{rank1-L}
L^{(j)}(x)
=
\begin{pmatrix}
1-x u k & (1-u v k) \phi
\\
x \phid & x-v k
\end{pmatrix}^{(j)}.
\end{align}
A monodromy matrix is constructed by taking a product of these $L$-matrices, where $j$ ranges from $1$ up to $r$, the largest part of the partition that will subsequently interest us:
\begin{align}
\label{rank1-mon}
T(x)
=
L^{(1)}(x)
\cdots
L^{(r)}(x)
=
\begin{pmatrix}
T_{00}(x) & T_{01}(x)
\\
T_{10}(x) & T_{11}(x)
\end{pmatrix}.
\end{align}
\begin{defn}{\rm
Let $\lambda= 1^{m_1} \dots r^{m_r}$ be a partition with largest part $\lambda_1=r$, whose part multiplicities satisfy $\sum_{i=1}^{r} m_i(\lambda) \leq n$. We define symmetric polynomials $F_{\lambda}(x_1,\dots,x_n;t;u,v)$ as expectation values in the rank-1 model \eqref{rank1-L}, as follows:
\begin{align}
\label{F-defn}
F_{\lambda}(x_1,\dots,x_n;t;u,v)
:=
\bra{\lambda}
\mathcal{T}(x_1)
\dots
\mathcal{T}(x_n)
\ket{0},
\qquad
\bra{\lambda} = \bigotimes_{j=1}^{r} \bra{m_j(\lambda)}^{(j)},
\qquad
\ket{0} = \bigotimes_{j=1}^{r} \ket{0}^{(j)},
\end{align}
}
\end{defn}
\noindent
where $\mathcal{T}(x) = T_{00}(x) + T_{10}(x)$ is the sum of entries in the first column of \eqref{rank1-mon}.

\begin{remark}
\label{conventions}
{\rm
Up to differences in normalisation and a shift of the indexing partition $\lambda$, the polynomials \eqref{F-defn} are the same as those of Borodin--Petrov. Denoting the rational symmetric functions of \cite{Borodin} by 
$\textsf{F}_{\lambda}(x_1,\dots,x_n;t;s)$, the exact correspondence is given by
\begin{align}
\label{matching}
\textsf{F}_{\lambda}(x_1,\dots,x_n;t;s)
=
\left.
\frac{
F_{(\lambda+1)}(x_1,\dots,x_n;t;u,v)
}
{
\prod_{i=1}^{n}
x_i(1-x_i u)^{\lambda_1+1}
}
\right|_{u=v=s}
\end{align}
where $(\lambda+1)$ denotes the partition obtained from $\lambda$ by adding $1$ to every part. The factor of $(1-x_i u)^{\lambda_1+1}$ in the denominator is to account for the fact that \cite{Borodin} uses a rational normalisation of the $L$-matrix \eqref{rank1-mon}, whereas we adopt its polynomial normalisation. The appearance of $x_i$ in the denominator accounts for the slightly different gauge used in our solution of the rank-1 Yang--Baxter algebra, compared with \cite{Borodin}. To explain the shift in the partition, consider \eqref{F-defn} in the case $\sum_{i=1}^{r} m_i(\lambda) = n$. In that case, the $T_{00}(x_i)$ operators have a vanishing contribution to the expectation value in \eqref{F-defn}, and we can replace each $\mathcal{T}(x_i)$ by $T_{10}(x_i)$. The resulting expectation value then matches that of Borodin--Petrov after performing the shift $\lambda \mapsto (\lambda -1)$, which we are able to do, given that all parts of $\lambda$ are strictly positive in this case. The polynomials $F_{\lambda}(x_1,\dots,x_n;t;u,v)$ are therefore slightly more general than those studied in \cite{Borodin,BorodinP}, since they allow free boundary conditions at the left edge of the underlying lattice, as we shall shortly see.
}
\end{remark}

\begin{remark}{\rm
The functions $F_{\lambda}(x_1,\dots,x_n;t;u,v)$ can be expressed as an explicit sum over the symmetric group:
\begin{multline}
\label{F-sym}
F_{\lambda}(x_1,\dots,x_n;t;u,v)
=
\\
\frac{\prod_{i=1}^{n}(1-x_i u)^{\lambda_1}}{v_{\lambda}(t)}
\times
\sum_{\sigma \in S_n}
\sigma\left[
\prod_{1 \leq i<j \leq n}
\left(
\frac{x_i-tx_j}{x_i-x_j}
\right)
\prod_{i=1}^{n}
\left(
\frac{x_i-v}{1-x_i u}
\right)^{\lambda_i}
\left(
\frac{x_i}{x_i-v}
\right)^{\mathbbm{1}(\lambda_i > 0)}
\right],
\end{multline}
where $\mathbbm{1}(\cdot)$ denotes the indicator function, and $v_{\lambda}(t) = \prod_{i\geq 0} \prod_{j=1}^{m_i(\lambda)} (1-t^j)/(1-t)$ is a standard normalising factor from Hall--Littlewood theory \cite{MacdBook}. We omit the proof of this result, since we will not require it in our subsequent calculations. It can be proved either by simple modifications of the Bethe Ansatz approach in \cite{BorodinP}, or by the $F$-basis approach in \cite{WheelerZJ}. Equation \eqref{F-sym} allows easy comparison with the family introduced in \cite{Borodin}, when $u=v=s$. 
}
\end{remark}

\subsection{Lattice representation of $F_{\lambda}(x_1,\dots,x_n;t;u,v)$}
\label{uncoloured-model}

Another way of viewing $F_{\lambda}(x_1,\dots,x_n;t;u,v)$ is a partition function in an integrable lattice model. This is valuable not only for a more combinatorial understanding of $F_{\lambda}(x_1,\dots,x_n;t;u,v)$, but also for assigning to it a probabilistic interpretation \cite{BorodinP}. In the Hall--Littlewood case ($u=v=0$), this point of view has been well explored, see for example \cite{Korff,WheelerZJ}. Here we mostly follow the notation and conventions of \cite{Borodin,BorodinP}. One begins by representing the entries of the $L$-matrix \eqref{rank1-L} as vertices:
\begin{align}
\label{vertices}
\begin{array}{cccc}
\begin{tikzpicture}[scale=0.8]
\draw[dotted] (-1,0) -- (1,0);
\draw[dotted] (0,-1) -- (0,1);
\node[below] at (0,-1) {$m$};
\draw[thick,->] (-0.15,-1) -- (-0.15,1);
\draw[thick,->] (0,-1) -- (0,1);
\draw[thick,->] (0.15,-1) -- (0.15,1);
\node[above] at (0,1) {$m$};
\end{tikzpicture}
\quad\quad\quad\quad\quad
&
\begin{tikzpicture}[scale=0.8]
\draw[dotted] (-1,0) -- (1,0);
\draw[dotted] (0,-1) -- (0,1);
\node[below] at (0,-1) {$m$};
\draw[thick,->] (-0.15,-1) -- (-0.15,1);
\draw[thick,->] (0,-1) -- (0,1);
\draw[thick,->] (0.15,-1) -- (0.15,0) -- (1,0);
\node[above] at (0,1) {$m-1$};
\end{tikzpicture}
\quad\quad\quad\quad\quad
&
\begin{tikzpicture}[scale=0.8]
\draw[dotted] (-1,0) -- (1,0);
\draw[dotted] (0,-1) -- (0,1);
\node[below] at (0,-1) {$m$};
\draw[thick,->] (-1,0) -- (-0.15,0) -- (-0.15,1);
\draw[thick,->] (0,-1) -- (0,1);
\draw[thick,->] (0.15,-1) -- (0.15,1);
\node[above] at (0,1) {$m+1$};
\end{tikzpicture}
\quad\quad\quad\quad\quad
&
\begin{tikzpicture}[scale=0.8]
\draw[dotted] (-1,0) -- (1,0);
\draw[dotted] (0,-1) -- (0,1);
\node[below] at (0,-1) {$m$};
\draw[thick,->] (-1,0) -- (-0.15,0) -- (-0.15,1);
\draw[thick,->] (0,-1) -- (0,1);
\draw[thick,->] (0.15,-1) -- (0.15,0) -- (1,0);
\node[above] at (0,1) {$m$};
\end{tikzpicture}
\\
\bra{m} L_{00} \ket{m}
\quad\quad\quad\quad\quad
&
\bra{m-1} L_{01} \ket{m}
\quad\quad\quad\quad\quad
&
\bra{m+1} L_{10} \ket{m}
\quad\quad\quad\quad\quad
&
\bra{m} L_{11} \ket{m}
\\ \\
1-x u t^m
\quad\quad\quad\quad\quad
&
1-u v t^{m-1}
\quad\quad\quad\quad\quad
&
x(1-t^{m+1})
\quad\quad\quad\quad\quad
&
x-v t^m
\end{array}
\end{align}
Define the set $\mathcal{P}_n(\lambda)$, consisting of all possible configurations of $n$ paths on an $n \times \lambda_1$ lattice, subject to these boundary conditions: {\bf 1.} The bottom and right edges of the lattice are unoccupied, {\bf 2.} The left edges may be either occupied or unoccupied, {\bf 3.} The top edges are occupied according to the data $\{m_1,\dots,m_{\lambda_1}\}$. For example, in the case $n=4$ and $\lambda = (4,3,3,1)$, 
$\mathcal{P}_n(\lambda)$ is the set of all possible configurations on the lattice
\begin{align*}
\begin{tikzpicture}[scale=0.8]
\foreach\y in {1,...,4}{
\draw[dotted] (0,\y) -- (6,\y);
\draw[thick,->] (0,\y) -- (1,\y);
\node[left] at (0,5-\y) {$x_\y$}; 
}
\foreach\x in {1,...,5}{
\draw[dotted] (\x,0) -- (\x,5);
\node[above] at (\x,5) {\scriptsize $m_\x$};
}
\draw[thick,->] (1,4) -- (1,5);
\draw[thick,->] (2.925,4) -- (2.925,5);
\draw[thick,->] (3.075,4) -- (3.075,5);
\draw[thick,->] (4,4) -- (4,5);
\end{tikzpicture}
\end{align*}
using the four types of vertices \eqref{vertices}. The Boltzmann weight of a single configuration $\mathcal{P}$ is the product of the Boltzmann weights of the constituent vertices, and denoted $W_{\mathcal{P}}(x_1,\dots,x_n;t;u,v)$. The expectation value \eqref{F-defn} can now be cast as a partition function of the set $\mathcal{P}_n(\lambda)$:
\begin{align}
\label{F-PF}
F_{\lambda}(x_1,\dots,x_n;t;u,v)
=
\sum_{\mathcal{P} \in \mathcal{P}_n(\lambda)}
W_{\mathcal{P}}(x_1,\dots,x_n;t;u,v).
\end{align}

\subsection{The $q=0$ case of the matrix product}

At $q=0$, the matrix product \eqref{symmP} greatly simplifies. This simplification is by virtue of the twist $\mathbb{S}$. One can easily see that all traces over $\phi^{(j)}_i$ bosons reduce to vacuum expectation values of the form $\langle 0|_i^{(j)} \cdots |0\rangle_i^{(j)}$, since all ``higher'' terms in the trace $\langle m|_i^{(j)} \cdots |m\rangle_i^{(j)}$ ($m \geq 1$) give rise to positive powers of $q$ and hence vanish. Letting $f_{\mu}(x_1,\dots,x_n;t;u,v)$ denote the polynomial $f_{\mu}(x_1,\dots,x_n;q,t;u,v)$ at $q=0$, we obtain
\begin{align}
\label{f_at_q=0}
f_{\mu}(x_1,\dots,x_n;t;u,v)
=
\textstyle{
\bigotimes_{i=1}^{r}
{\underbrace{\langle 0,\dots,0,m_i,0,\dots,0|}_{i^{\rm th}}}^{(i)}
}
A_{\mu_1}(x_1)
\dots
A_{\mu_n}(x_n)
\textstyle{
\bigotimes_{i=1}^{r}
|0,\dots,0 \rangle^{(i)}
}
\end{align}
where for each $1 \leq i \leq r$, $|0,\dots,0 \rangle^{(i)} \in \mathcal{F}^{(i)}_1 \otimes \cdots \otimes \mathcal{F}^{(i)}_r$ denotes a completely unoccupied state and $\langle 0,\dots,0,m_i,0,\dots,0|^{(i)} \in \mathcal{F}^{*(i)}_1 \otimes \cdots \otimes \mathcal{F}^{*(i)}_r$ is a dual state containing $m_i(\mu)$ particles of type $i$. Likewise, from \eqref{symmP} at $q=0$, the symmetrised polynomial $P_{\lambda}$ becomes
\begin{align}
\label{P_at_q=0}
P_{\lambda}(x_1,\dots,x_n;t;u,v)
=
\textstyle{
\bigotimes_{i=1}^{r}
{\underbrace{\langle 0,\dots,0,m_i,0,\dots,0|}_{i^{\rm th}}}^{(i)}
}
A(x_1)
\dots
A(x_n)
\textstyle{
\bigotimes_{i=1}^{r}
|0,\dots,0 \rangle^{(i)}.
}
\end{align}

\subsection{Lattice representation of $P_{\lambda}(x_1,\dots,x_n;t;u,v)$}
\label{coloured-model}

Proceeding along similar lines as above, one can represent \eqref{P_at_q=0} as a lattice partition function. It is first necessary to depict the entries of the $L$-matrix as vertices, and given that we are now working in a rank-$r$ model, we shall represent the different possible families by colouring our lattice paths. In what follows, the green lattice paths represent some family $i$, while blue paths represent another family $j$, where $1 \leq i<j \leq r$. Black lines indicate an arbitrary set of paths (of any colour) which propagate unchanged through the vertex:
\begin{align*}
\begin{array}{|c|c|c|}
\hline
\begin{tikzpicture}[scale=0.8]
\draw[dotted] (-1,0) -- (1,0);
\draw[dotted] (0,-1) -- (0,1);
\node[below] at (0,-1) {$\{m_1,\dots,m_r\}$};
\draw[thick,->] (-0.15,-1) -- (-0.15,1);
\draw[thick,->] (0,-1) -- (0,1);
\draw[thick,->] (0.15,-1) -- (0.15,1);
\node[above] at (0,1) {$\{m_1,\dots,m_r\}$};
\end{tikzpicture}
&
\begin{tikzpicture}[scale=0.8]
\draw[dotted] (-1,0) -- (1,0);
\draw[dotted] (0,-1) -- (0,1);
\node[below] at (0,-1) {$\{\dots,{\dg m_i},\dots\}$};
\draw[thick,->] (-0.15,-1) -- (-0.15,1);
\draw[thick,->] (0,-1) -- (0,1);
\draw[thick,->,green!65!black] (0.15,-1) -- (0.15,0) -- (1,0);
\node[above] at (0,1) {$\{\dots,{\dg m_i-1},\dots\}$};
\end{tikzpicture}
&
\begin{tikzpicture}[scale=0.8]
\draw[dotted] (-1,0) -- (1,0);
\draw[dotted] (0,-1) -- (0,1);
\node[below] at (0,-1) {$\{\dots,{\dg m_i},\dots\}$};
\draw[thick,->,green!65!black] (-1,0) -- (-0.15,0) -- (-0.15,1);
\draw[thick,->] (0,-1) -- (0,1);
\draw[thick,->] (0.15,-1) -- (0.15,1);
\node[above] at (0,1) {$\{\dots,{\dg m_i+1},\dots\}$};
\end{tikzpicture}
\\ \hline & & \\ 
1-x u t^{|m|}
&
1-u v t^{|m|-1}
&
x(1-t^{m_i+1}) t^{|m|_i}
\\ & & \\ \hline & & \\
\begin{tikzpicture}[scale=0.8]
\draw[dotted] (-1,0) -- (1,0);
\draw[dotted] (0,-1) -- (0,1);
\node[below] at (0,-1) {$\{m_1,\dots,m_r\}$};
\draw[thick,->,green!65!black] (-1,0) -- (-0.15,0) -- (-0.15,1);
\draw[thick,->] (0,-1) -- (0,1);
\draw[thick,->,green!65!black] (0.15,-1) -- (0.15,0) -- (1,0);
\node[above] at (0,1) {$\{m_1,\dots,m_r\}$};
\end{tikzpicture}
&
\begin{tikzpicture}[scale=0.8]
\draw[dotted] (-1,0) -- (1,0);
\draw[dotted] (0,-1) -- (0,1);
\node[below] at (0,-1) {$\{\dots,{\dg m_i},\dots,{\bl m_j},\dots\}$};
\draw[thick,->,blue!65!black] (-1,0) -- (-0.15,0) -- (-0.15,1);
\draw[thick,->] (0,-1) -- (0,1);
\draw[thick,->,green!65!black] (0.15,-1) -- (0.15,0) -- (1,0);
\node[above] at (0,1) {$\{\dots,{\dg m_i-1},\dots,{\bl m_j+1},\dots\}$};
\end{tikzpicture}
&
\begin{tikzpicture}[scale=0.8]
\draw[dotted] (-1,0) -- (1,0);
\draw[dotted] (0,-1) -- (0,1);
\node[below] at (0,-1) {$\{\dots,{\dg m_i},\dots,{\bl m_j},\dots\}$};
\draw[thick,->,green!65!black] (-1,0) -- (-0.15,0) -- (-0.15,1);
\draw[thick,->] (0,-1) -- (0,1);
\draw[thick,->,blue!65!black] (0.15,-1) -- (0.15,0) -- (1,0);
\node[above] at (0,1) {$\{\dots,{\dg m_i+1},\dots,{\bl m_j-1},\dots\}$};
\end{tikzpicture}
\\ \hline & & \\
(x-v t^{m_i}) t^{|m|_i}
&
x (1-t^{m_j+1}) t^{|m|_j}
&
v (1-t^{m_i+1}) t^{|m|_i-1}
\\ & & \\ \hline
\end{array}
\end{align*}
where in all cases $|m|:=\sum_{\ell=1}^{r} m_\ell$ and $|m|_i := \sum_{\ell = i+1}^{r} m_{\ell}$. 

Introduce the set $\mathcal{C}_n(\lambda)$, consisting of all possible configurations of coloured paths on an $n \times \lambda_1$ lattice, subject to the following boundary conditions: {\bf 1.} The bottom and right edges of the lattice are unoccupied, {\bf 2.} Each left edge may be occupied by any coloured path or unoccupied, {\bf 3.} The top edge in the $i^{\rm th}$ column is occupied by $m_i(\lambda)$ paths of colour $i$, and no other paths. For example, in the case $n=4$ and $\lambda = (4,3,3,1)$, $\mathcal{C}_n(\lambda)$ is the set of all configurations on the lattice
\begin{align*}
\begin{tikzpicture}[scale=0.8]
\foreach\y in {1,...,4}{
\draw[dotted] (-1,\y) -- (5,\y);
\node[left] at (-1,5-\y) {$x_\y$};
}
\draw[thick,red!65!black,->] (-1,4) -- (0,4);
\draw[thick,blue!65!black,->] (-1,3) -- (0,3);
\draw[thick,red!65!black,->] (-1,2) -- (0,2);
\draw[thick,green!65!black,->] (-1,1) -- (0,1);
\foreach\x in {1,...,5}{
\draw[dotted] (\x-1,0) -- (\x-1,5);
\node[above] at (\x-1,5) {\scriptsize $m_\x$};
}
\draw[thick,green!65!black,->] (0,4) -- (0,5);
\draw[thick,red!65!black,->] (1.925,4) -- (1.925,5);
\draw[thick,red!65!black,->] (2.075,4) -- (2.075,5);
\draw[thick,blue!65!black,->] (3,4) -- (3,5);
\end{tikzpicture}
\end{align*}
in addition to all configurations which are possible by permuting the colours at the left edge. The Boltzmann weight of a configuration $\mathcal{C}$ is, once again, the product of the Boltzmann weights of its constituent vertices, and denoted $\widetilde{W}_{\mathcal{C}}(x_1,\dots,x_n;t;u,v)$. The lattice representation of \eqref{P_at_q=0} is then
\begin{align}
\label{P-PF}
P_{\lambda}(x_1,\dots,x_n;t;u,v)
=
\sum_{\mathcal{C} \in \mathcal{C}_n(\lambda)}
\widetilde{W}_{\mathcal{C}}(x_1,\dots,x_n;t;u,v).
\end{align}

\begin{defn}{\rm
Let $\mathcal{C}$ be a coloured path configuration in $C_n(\lambda)$. The \textit{black-and-white projection} of 
$\mathcal{C}$, denoted $\mathcal{C}^{*}$, is the profile traced out by the paths in $\mathcal{C}$. In other words, $\mathcal{C}^{*}$ is obtained from $\mathcal{C}$ by colouring all paths black.
}
\end{defn}

\subsection{Equivalence of polynomials}

At this stage, one notices the strong similarity between the form of \eqref{F-defn} and \eqref{P_at_q=0}. Aside from the fact that \eqref{F-defn} is expressible in terms of rank-1 $L$-matrices, while \eqref{P_at_q=0} makes use of rank-$r$ $L$-matrices, it is conceivable that the two expressions are related. In fact, they are equal:
\begin{prop}
\label{main-prop}
{\rm 
For all partitions $\lambda$, we have
\begin{align}
\label{q=0_result}
F_{\lambda}(x_1,\dots,x_n;t;u,v)
=
P_{\lambda}(x_1,\dots,x_n;t;u,v).
\end{align}
}
\end{prop}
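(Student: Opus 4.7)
I aim to prove \eqref{q=0_result} via the black-and-white projection $\pi:\mathcal{C}_n(\lambda)\to\mathcal{P}_n(\lambda)$, $\mathcal{C}\mapsto\mathcal{C}^*$, which forgets path colours and is the natural bridge between the rank-$r$ lattice of Section~\ref{coloured-model} and the rank-1 lattice of Section~\ref{uncoloured-model}. The key reduction is to the weight-summation identity
\begin{align*}
\sum_{\mathcal{C}\in\pi^{-1}(\mathcal{P})}
\widetilde{W}_{\mathcal{C}}(x_1,\ldots,x_n;t;u,v)
=
W_{\mathcal{P}}(x_1,\ldots,x_n;t;u,v),
\qquad
\forall\,\mathcal{P}\in\mathcal{P}_n(\lambda),
\end{align*}
from which Proposition~\ref{main-prop} follows by summing over $\mathcal{P}\in\mathcal{P}_n(\lambda)$ and applying \eqref{F-PF} and \eqref{P-PF}. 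The first preliminary step is to check that $\pi$ is well-defined: the coloured top boundary (column $j$ carrying $m_j(\lambda)$ paths of colour $j$) projects to the uncoloured top boundary with the same column multiplicities, while the bottom, right, and left boundary data match directly.

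Establishing the weight identity is the substantive task. Since colour is conserved along each coloured path, the colour of every path is forced by its top-exit column, so the fibre $\pi^{-1}(\mathcal{P})$ is parametrised by the choice of \emph{same-colour pass-through} versus \emph{colour-change swap} at each $L_{11}$-type vertex of $\mathcal{P}$ where a horizontal and a vertical path meet. The core local computation is that, at each such vertex, the sum of the same-colour contribution $(x-vt^{m_i})t^{|m|_i}$ together with the two colour-change contributions $x(1-t^{m_j+1})t^{|m|_j}$ and $v(1-t^{m_i+1})t^{|m|_i-1}$ (with the colour labels $i,j$ forced by the global path-tracing) collapses, after telescoping in the colour index, to the rank-1 weight $x-vt^{|m|}$. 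The analogous identities at $L_{00}$-, $L_{01}$- and $L_{10}$-type vertices are simpler and reduce to telescoping sums in $i$ alone.

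The main obstacle is that the swap/pass-through choices at distinct $L_{11}$-type vertices are not independent: they are coupled by global colour conservation. My strategy to overcome this is to process the lattice row-by-row, directly at the level of the matrix-product expression \eqref{P_at_q=0}. After the vacuum-projection simplifications of Section~\ref{se:simp1} combined with the $q=0$ simplifications preceding \eqref{f_at_q=0}, the rank-$r$ operator $A(x_i)$ acts effectively only on the diagonal-boson Fock spaces $\bigotimes_{i}\mathcal{F}_i^{(i)}$, with all off-diagonal bosons $\mathcal{B}_i^{(j)}$, $i\neq j$, having been reduced to vacuum expectations. The technical heart of the proof is then to identify this effective operator with the rank-1 operator $\mathcal{T}(x_i)=T_{00}(x_i)+T_{10}(x_i)$ of \eqref{rank1-mon}, matching the diagonal creation operators $\phi_i^{\dagger(i)}$ with the rank-1 $\phi^{\dagger(i)}$ and carefully tracking the factors of $t$ produced by commutations of the $k_i^{(i)}$ operators with the creation operators. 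Once this operator identification is in place, the equality of expectation values between matching boundary data for the two monodromy expressions is immediate, completing the proof.
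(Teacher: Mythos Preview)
Your overall strategy matches the paper's: both reduce Proposition~\ref{main-prop} to the fibre-wise weight identity (Proposition~\ref{prop-colour} in the paper) and then sum over $\mathcal{P}\in\mathcal{P}_n(\lambda)$. You also correctly identify the local telescoping structure and the global-coupling obstacle.

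Your proposed resolution of that obstacle, however, has a gap. The claim that at $q=0$ the off-diagonal bosons $\mathcal{B}_i^{(j)}$ with $i\neq j$ are all ``reduced to vacuum expectations'' so that ``$A(x)$ acts effectively only on the diagonal-boson Fock spaces'' is not correct. Section~\ref{se:simp1} eliminates only the bosons with $i<j$. The bosons with $i>j$ --- which encode a path of colour $i$ occupying column $j<i$, something that happens for \emph{every} path on its way to its exit column --- remain. At $q=0$ their trace does collapse to $\langle 0|\cdots|0\rangle$, but this still permits arbitrary creation and annihilation of colour-$i$ particles at column $j$ in intermediate rows. Hence your proposed operator identification $\phi_i^{\dagger(i)}\leftrightarrow\phi^{\dagger(i)}$ cannot hold at the level of single-row operators: the $i>j$ bosons contribute nontrivial factors that your scheme ignores.

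The paper resolves the coupling obstacle in a different and cleaner way. Rather than attempting to simplify the operators, it introduces the states
\[
\Ket{M}=\sum_{|m|=M}\ket{m_1,\ldots,m_r},
\]
the sum over all colourings of $M$ particles at a given site, and proves (Theorem~\ref{amazing}) that the rank-$r$ $L$-matrix acts on $\Ket{M}$ exactly as the rank-1 $L$-matrix acts on $\ket{M}$. Equations \eqref{1}--\eqref{4} are precisely your four local telescoping identities, but formulated with the bottom edge already summed over all colourings. This decouples the vertices: since the bottom edge of each vertex is the top edge of the one below, and the lattice starts from $\Ket{0}$ at the bottom boundary, the $\Ket{M}$-summation propagates upward automatically, and \eqref{colour-ind} follows at once. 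The point you were missing is that one should sum over colourings of the \emph{vertical} edge states rather than try to project them away.
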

There is an even stronger result, related to the combinatorial interpretation of \eqref{q=0_result}. On the left hand side, we have a partition function \eqref{F-PF} whose configurations are lattice paths which propagate SW $\longrightarrow$ NE. On the right hand side, we have a partition function \eqref{P-PF} featuring coloured lattice path configurations, propagating in the same direction. It is natural, then, to search for a correspondence which identifies a single term in the partition function on the left hand side with multiple terms on the right hand side. Such a correspondence exists: 
\begin{prop}
\label{prop-colour}{\rm
Let $\mathcal{P}$ be a configuration of lattice paths in the sum \eqref{F-PF}, and $W_{\mathcal{P}}$ its corresponding Boltzmann weight. Similarly let $\mathcal{C}$ be a configuration of coloured lattice paths in the sum \eqref{P-PF}, $\widetilde{W}_{\mathcal{C}}$ its Boltzmann weight, and $\mathcal{C}^{*}$ the black-and-white projection of $\mathcal{C}$. Then
\begin{align}
\label{colour-ind}
W_{\mathcal{P}}(x_1,\dots,x_n;t;u,v)
=
\sum_{\substack{\mathcal{C} \in \mathcal{C}_n(\lambda) \\ \mathcal{C}^* = \mathcal{P}}}
\widetilde{W}_{\mathcal{C}}(x_1,\dots,x_n;t;u,v).
\end{align}
}
\end{prop}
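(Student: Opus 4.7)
Since both $W_\mathcal{P}$ and $\widetilde{W}_\mathcal{C}$ are multiplicative over the vertices of the lattice, the natural strategy is to prove \eqref{colour-ind} via a local identity at each vertex and then globalise by taking products. Concretely, for each of the four uncoloured vertex types in Section~\ref{uncoloured-model}, one classifies the coloured vertex refinements from Section~\ref{coloured-model} and sums their weights, aiming to recover the uncoloured vertex weight. The straight-through vertex admits a unique coloured refinement, giving a trivial identity. The two vertices with either an incoming or outgoing horizontal admit $r$ coloured refinements indexed by the colour of the horizontal edge. The most delicate case is the horizontal-through vertex, which admits both the non-swap refinement (the diagonal coloured vertex with weight $(x-vt^{m_i})t^{|m|_i}$) and the two swap refinements (with weights involving two colours $i<j$); here the coloured weights must conspire to reproduce the uncoloured weight $x - vt^{m}$.

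I would formulate the local identity at the $L$-matrix level: introduce a colour-forgetting projection $\pi \colon \mathbb{C}^{r+1} \twoheadrightarrow \mathbb{C}^2$ on horizontal states (mapping $0 \mapsto 0$ and $i \mapsto 1$ for $i \geq 1$) together with an occupation-summing map on vertical Fock spaces, and then prove that the rank-$r$ $L$-matrix \eqref{eq:Lmat} intertwines with the rank-1 $L$-matrix \eqref{rank1-L} under these projections. This intertwining should follow from the stochasticity of the rank-$r$ $R$-matrix \eqref{Rmat-def} (the property that its columns sum to $1$), in the same spirit as the commutator $[A(x),A(y)]=0$ deduced in Section~\ref{se:P}. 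The key telescoping identities should involve the $t$-boson combinatorial factors $(1-t^{m_i})$ and $t^{|m|_i}$ collapsing to rank-1 expressions such as $(1 - t^{|m|})$.

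The main obstacle will be the horizontal-through vertex, where the non-swap and swap contributions must conspire to yield the rank-1 weight $x - v t^{m}$. This is where the rank-$r$ combinatorics differs most from the rank-1 case, and the argument must exploit the explicit Fock-space structure of \eqref{eq:Lmat}, most likely by a direct computation in the $t$-boson algebra. Once the $L$-matrix intertwining (equivalently, the local vertex sum rule) is in hand, the global statement \eqref{colour-ind} follows by multiplying over all vertices of the lattice and summing over intermediate vertical colour vectors consistent with the boundary conditions imposed in $\mathcal{C}_n(\lambda)$.
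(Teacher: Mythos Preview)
Your overall strategy is correct and matches the paper's: prove a local vertex identity and iterate over the lattice. The paper packages the local identity as Theorem~\ref{amazing}, which is precisely the $t$-boson computation you anticipate, and the telescoping you predict is exactly what makes relations \eqref{3} and \eqref{4} work.

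There is, however, a subtle but genuine gap in your formulation. A symmetric intertwining of the form $(\pi_h\otimes\pi_v)\circ L^{(r)}=L^{(1)}\circ(\pi_h\otimes\pi_v)$, with $\pi_v$ the occupation-summing map $|m_1,\dots,m_r\rangle\mapsto |\,|m|\,\rangle$, does \emph{not} hold. Already for the creation vertex one finds $\sum_{i=1}^{r}L_{i0}^{(r)}|m_1,\dots,m_r\rangle$ carries the factor $\sum_{i}t^{|m|_i}$, which depends on the individual $m_i$ and not merely on $|m|$. What does hold is an \emph{asymmetric} statement: one first passes to the summed state $\Ket{M}=\sum_{|m|=M}|m_1,\dots,m_r\rangle$ on the incoming vertical edge, and then sums the incoming horizontal colour over each occupation class. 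The result is a rank-$1$ weight times $\Ket{M'}$, \emph{independently of which outgoing horizontal colour $j$ one fixes}. In symbols, for every $j\in\{1,\dots,r\}$,
\[
L_{00}\Ket{M}=(1-xut^M)\Ket{M},\quad L_{0j}\Ket{M}=(1-uvt^{M-1})\Ket{M-1},
\]
\[
\sum_{i\geq 1}L_{i0}\Ket{M}=x(1-t^{M+1})\Ket{M+1},\quad \sum_{i\geq 1}L_{ij}\Ket{M}=(x-vt^M)\Ket{M}.
\]
This sum-over-incoming/fix-outgoing asymmetry is not cosmetic: it is what allows the iteration across the lattice to work (process vertices in the SW$\to$NE direction, so that each vertex receives a $\Ket{M}$ from below and a colour-summed horizontal from the left, and passes forward a $\Ket{M'}$ and an arbitrary fixed colour). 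Your final paragraph on globalisation is correct once the local relation is stated this way.

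A second, smaller correction: the local identities do \emph{not} follow from stochasticity of the $R$-matrix. Column-sum-to-one for $R$ gives $[A(x),A(y)]=0$ (as used in Section~\ref{se:P}), but says nothing about how the $L$-matrix acts on $\Ket{M}$. The relations above are obtained by direct calculation in the $t$-boson algebra using the explicit entries \eqref{eq:Lmat}; the telescoping identity $\sum_{i=1}^{r}(1-t^{m_i})t^{m_{i+1}+\cdots+m_r}=1-t^{|m|}$ is the engine, exactly as you suspected.
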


\begin{remark}{\rm
Equation \eqref{colour-ind} is very analogous to identities obtained in \cite{FodaW}. The ``colour-independence'' property was used in \cite{FodaW} to show that certain partition functions in $sl(n)$ vertex models are in fact equal to $sl(2)$ counterparts, much as in the situation at hand.
}
\end{remark}

The rest of this section will be devoted to the proof of \eqref{colour-ind}. Proposition \ref{main-prop} follows as an immediate corollary, by summing \eqref{colour-ind} over all path profiles $\mathcal{P}$. The first step is to prove the following, very powerful theorem:
\begin{thm}
\label{amazing}
{\rm
Let $L_{ij}(x)$ denote component $(i,j)$ of the rank-$r$ $L$-matrix, where $i,j \in \{0,1,\dots,r\}$. Let $|m_1,\dots,m_r\rangle$ denote a generic bosonic state in $\mathcal{F}_1 \otimes \cdots \otimes \mathcal{F}_r$, and define
\begin{align*}
\Ket{M}
=
\sum_{\substack{\{m_1,\dots,m_r\} \\ |m|=M}}
|m_1,\dots,m_r\rangle,
\qquad
\text{where}\ 
|m| = \sum_{i=1}^{r} m_i.
\end{align*}
In particular, $\Ket{0} = |0,\dots,0\rangle$. The following four relations are valid for any $M \geq 0$:
\begin{align}
\label{1}
L_{00}(x) \Ket{M} 
&=
(1-x u t^M)
\Ket{M},
\\
\label{2}
L_{0j}(x) 
\Ket{M}
&=
(1- uv t^{M-1})
\Ket{M-1},
\quad
\forall\ j \in \{1,\dots,r\},
\\
\label{3}
\sum_{i=1}^{r}
L_{i0}(x)
\Ket{M}
&=
x
(1-t^{M+1})
\Ket{M+1},
\\
\label{4}
\sum_{i=1}^{r}
L_{ij}(x)
\Ket{M}
&=
(x-v t^M)
\Ket{M},
\quad
\forall\ j \in \{1,\dots,r\}.
\end{align}
}
\end{thm}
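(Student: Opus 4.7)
\medskip
\noindent
\textbf{Proof proposal.}
My plan is to verify each of the four relations by a direct computation: evaluate the action of the relevant matrix entries of $L(x)$ on a generic basis state $|m_1,\dots,m_r\rangle$ with $|m|=M$ using the Fock representation of the $t$-bosons, then sum the result over all compositions with $|m|=M$ (and, in \eqref{3} and \eqref{4}, also over the index $i$). The identities \eqref{1} and \eqref{2} are essentially one-step. For \eqref{1}, the operator $L_{00}(x)=1-xu\prod_l k_l$ acts as the scalar $1-xut^{|m|}=1-xut^M$ on every summand of $\Ket{M}$, so the eigenvalue factors out. For \eqref{2}, one first applies $\phi_j$ (which lowers $m_j\mapsto m_j-1$ and kills the $m_j=0$ contribution), then factors out the scalar $1-uvt^{M-1}$ coming from $1-uv\prod_l k_l$ evaluated on the resulting state of total weight $t^{M-1}$; after reindexing the summation by $m_j\mapsto m_j-1$, the remaining sum is exactly $\Ket{M-1}$.

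The real content lies in \eqref{3} and \eqref{4}, and in both cases the mechanism is a \emph{telescoping sum} in the index $i$. For \eqref{3}, a short computation gives
\[
L_{i0}(x)\,|m_1,\dots,m_r\rangle
= x\,t^{|m|_i}(1-t^{m_i+1})\,|m_1,\dots,m_i+1,\dots,m_r\rangle,
\qquad |m|_i:=\sum_{\ell>i}m_\ell.
\]
Fixing a target state $|m'\rangle$ with $|m'|=M+1$ and reading off its coefficient in $\sum_i L_{i0}(x)\Ket{M}$, one obtains
$x\sum_{i:m'_i\geq 1}t^{|m'|_i}(1-t^{m'_i})=x\sum_{i=1}^{r}\bigl(t^{|m'|_i}-t^{|m'|_{i-1}}\bigr)$,
which telescopes to $x(t^{|m'|_r}-t^{|m'|_0})=x(1-t^{M+1})$, independent of $m'$; summing over $|m'|=M+1$ recovers the right-hand side.

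For \eqref{4} the same strategy applies, but three kinds of contributions must be combined: the diagonal $i=j$ term $(x-vt^{m_j})t^{|m|_j}|m\rangle$, the off-diagonal $i>j$ term $x\,t^{|m|_i}(1-t^{m_i+1})$ acting by the hop $j\to i$, and the off-diagonal $i<j$ term carrying the prefactor $v$. In the latter case one must be careful that the hop $j\to i$ decreases $m_j$ by one, and that for $i<j$ the index $j$ \emph{is} included in the range of $\prod_{\ell>i}k_\ell$, producing a shifted exponent $|m|_i-1$. Collecting the coefficient of a fixed output state $|m'\rangle$ and setting $s_i:=|m'|_i$, the $x$-part becomes $xt^{s_j}+x\sum_{i>j}(t^{s_i}-t^{s_{i-1}})$, which telescopes to $xt^{s_r}=x$; the $v$-part, using $m'_j+|m'|_j=|m'|_{j-1}$ for the diagonal piece, becomes $-vt^{s_{j-1}}+v\sum_{i<j}(t^{s_i}-t^{s_{i-1}})$, which telescopes to $-vt^{s_0}=-vt^M$. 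The grand total $x-vt^M$ is independent of $m'$, and summing over $|m'|=M$ gives the desired $(x-vt^M)\Ket{M}$.

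The main obstacle I anticipate is the bookkeeping in \eqref{4}: correctly tracking how the exponent $|m|_i$ shifts when the hop changes a coordinate $m_j$ with $j>i$, and then pairing the diagonal endpoints of the two telescopes ($s_j$ for the $x$-sum and $s_{j-1}$ for the $v$-sum) so that the result is both $m'$-independent and gives the same simple answer for every $j\in\{1,\dots,r\}$. Once that bookkeeping is set up, the two telescoping sums close up immediately, as the whole point of the identity is that the dependence on the internal composition $(m_1,\dots,m_r)$ collapses after summation over $i$.
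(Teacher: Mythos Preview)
Your proposal is correct and follows essentially the same route as the paper: direct evaluation of the $L$-matrix entries on Fock states, with \eqref{1}--\eqref{2} immediate and \eqref{3}--\eqref{4} reducing to telescoping sums in the index $i$. The only cosmetic difference is that you organise the computation by fixing a target state $|m'\rangle$ and reading off its coefficient, whereas the paper keeps the sum over source states $|m\rangle$ throughout; the telescoping identities $t^{|m|_i}(1-t^{m_i})=t^{|m|_i}-t^{|m|_{i-1}}$ and the resulting cancellations are identical in both presentations.
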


\begin{proof}{\rm
The proof is by direct computation, using the explicit form of the entries of $L(x)$. The relations \eqref{1} and \eqref{2} are immediate, since
\begin{align*}
L_{00}(x)
\Ket{M}
=
\sum_{\substack{\{m_1,\dots,m_r\} \\ |m|=M}}
(1- x u k_1\dots k_r )
|m_1,\dots,m_r\rangle
=
\sum_{\substack{\{m_1,\dots,m_r\} \\ |m|=M}}
(1- x u t^{|m|} )
|m_1,\dots,m_r\rangle,
\end{align*}
and
\begin{multline*}
L_{0j}(x) 
\Ket{M}
=
\sum_{\substack{\{m_1,\dots,m_r\} \\ |m|=M}}
(1- u v k_1\dots k_r)
\phi_j
|m_1,\dots,m_r\rangle
\\
=
\sum_{\substack{\{m_1,\dots,m_r\} \\ |m|=M-1}}
(1- u v k_1\dots k_r)
|m_1,\dots,m_r\rangle
=
\sum_{\substack{\{m_1,\dots,m_r\} \\ |m|=M-1}}
(1- u v t^{|m|})
|m_1,\dots,m_r\rangle.
\end{multline*}
The relation \eqref{3} is slightly more complicated. We find that
\begin{multline*}
L_{i0}(x)
\Ket{M}
=
\sum_{\substack{\{m_1,\dots,m_r\} \\ |m|=M}}
x
k_{r} \dots k_{i+1}
\phid_i
|m_1,\dots,m_r\rangle
\\
=
\sum_{\substack{\{m_1,\dots,m_r\} \\ |m|=M}}
x
(1-t^{m_i+1}) t^{m_{i+1}} \dots t^{m_r}
|\dots,m_i+1,\dots\rangle
=
\sum_{\substack{\{m_1,\dots,m_r\} \\ |m|=M+1}}
x
(1-t^{m_i}) t^{m_{i+1}} \dots t^{m_r}
|m_1,\dots,m_r\rangle,
\end{multline*}
where we used the fact that $(1-t^{m_i})=0$ if $m_i=0$ to write the final summation. Summing over all $1 \leq i \leq r$ produces a telescoping sum:
\begin{align*}
\sum_{i=1}^{r}
L_{i0}(x)
\Ket{M}
=
\sum_{\substack{\{m_1,\dots,m_r\} \\ |m|=M+1}}
x
(1-t^{m_1} \dots t^{m_r})
|m_1,\dots,m_r\rangle
=
\sum_{\substack{\{m_1,\dots,m_r\} \\ |m|=M+1}}
x
(1-t^{|m|})
|m_1,\dots,m_r\rangle.
\end{align*}
The relation \eqref{4} requires the most work. Using the explicit form of the entries 
$L_{ij}(x)$, we have
\begin{multline*}
\sum_{i=1}^{r}
L_{ij}(x)
\Ket{M}
\\
=
\sum_{\substack{\{m_1,\dots,m_r\} \\ |m|=M}}
\left(
\sum_{i=1}^{j-1}
(v
k_r \dots k_{i+1}
\phid_i \phi_j)
+
(x-v k_j) k_{j+1} \dots k_r
+
\sum_{i=j+1}^{r}
(x
k_r \dots k_{i+1} \phid_i \phi_j)
\right)
|m_1,\dots,m_r\rangle,
\end{multline*}
and after acting with each operator on the bosonic state vector, we obtain
\begin{multline*}
\sum_{i=1}^{r}
L_{ij}(x)
\Ket{M}
=
\sum_{\substack{\{m_1,\dots,m_r\} \\ |m|=M}}
\left(
\sum_{i=1}^{j-1}
(v t^{m_r} \dots t^{m_{i+1}}
(1-t^{m_i}))
\right.
\\
\left.
-
v t^{m_r} \dots t^{m_{j}}
+
x t^{m_r} \dots t^{m_{j+1}}
+
\sum_{i=j+1}^{r}
(x t^{m_r} \dots t^{m_{i+1}}
(1-t^{m_i}))
\right)
|m_1,\dots,m_r\rangle.
\end{multline*}
Similarly to above, the internal sums are telescoping, and this simplifies to
\begin{align*}
\sum_{i=1}^{r}
L_{ij}(x)
\Ket{M}
=
\sum_{\substack{\{m_1,\dots,m_r\} \\ |m|=M}}
(x-v t^{m_1}\dots t^{m_r})
|m_1,\dots,m_r\rangle.
\end{align*}
}
\end{proof}

\begin{cor}{\rm 
Equation \eqref{colour-ind} holds.
}
\end{cor}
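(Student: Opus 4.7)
The plan is to prove \eqref{colour-ind} by processing the lattice row-by-row, from bottom to top, using the four identities of Theorem~\ref{amazing} to collapse the sum over coloured configurations projecting to a fixed skeleton $\mathcal{P}$ into the product of rank-$1$ weights $W_{\mathcal{P}}$. Fix $\mathcal{P} \in \mathcal{P}_n(\lambda)$, and let $M_j^{(a)}$ denote the total vertical occupation in column $j$ at level $a$, which is determined by $\mathcal{P}$. The inductive invariant I would maintain is that, after summing over the colourings of all horizontal auxiliary edges in rows below level $a$, the joint bosonic state is a scalar multiple of $\bigotimes_j \Ket{M_j^{(a)}}^{(j)}$, the scalar being the accumulated product of rank-$1$ vertex weights over all vertices already processed. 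This invariant holds trivially at the bottom, since each column starts in $\ket{0,\ldots,0}^{(j)} = \Ket{0}^{(j)}$.

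For the inductive step, I expand $A(x_a) = \sum_{k_0=0}^{r} T_{k_0,0}(x_a)$ and the intermediate products, producing a sum over an auxiliary colour sequence $(k_0,k_1,\ldots,k_{r-1},k_r=0)$. For the fixed skeleton $\mathcal{P}$ the positions where $k_j$ is zero versus non-zero are dictated, and only the non-zero colour values are summed over $\{1,\ldots,r\}$. I would attribute each sum $\sum_{k_j \ne 0}$ to the vertex immediately to its right (in column $j+1$): identities (3)-(4) of Theorem~\ref{amazing} then convert this sum-over-left-aux into the rank-$1$ weight $x(1-t^{M+1})$ or $x-vt^M$ respectively, sending $\Ket{M}$ to the correctly aggregated output $\Ket{M+1}$ or $\Ket{M}$. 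The vertex immediately to the left of the summed edge is handled by identities (2) or (4), both of which state that the action on $\Ket{M}$ is independent of the right-aux colour --- so pushing the colour sum past this vertex introduces no spurious factor. Identity (1) handles vertices surrounded by zero aux on both sides. Together, applying Theorem~\ref{amazing} at all $r$ vertices of row $a$ yields exactly the product $\prod_{j} w_{(a,j)}$ of rank-$1$ row-$a$ weights, while evolving each column's bosonic state from $\Ket{M_j^{(a)}}^{(j)}$ to $\Ket{M_j^{(a-1)}}^{(j)}$, which preserves the invariant at level $a-1$.

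After the top row the projection onto $\bra{0,\ldots,0,m_i(\lambda),0,\ldots,0}^{(i)}$ pairs with any aggregated state $\Ket{M}^{(i)}$ to give $\delta_{M,\,m_i(\lambda)}$, exactly matching the rank-$1$ pairing $\bra{m_i(\lambda)}\ket{M}$, since $\Ket{M}^{(i)} = \sum_{|m|=M}\ket{m_1,\ldots,m_r}^{(i)}$ has precisely one summand equal to the relevant single-colour state. The bottom boundary likewise matches since $\Ket{0}^{(j)} = \ket{0,\ldots,0}^{(j)}$. Combining this with the row-by-row reduction yields $\sum_{\mathcal{C}:\mathcal{C}^*=\mathcal{P}} \widetilde{W}_{\mathcal{C}} = W_{\mathcal{P}}$, which is \eqref{colour-ind}. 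The main subtlety I expect is the attribution scheme itself: each internal horizontal aux appears as the right aux of one vertex and the left aux of the next, and one must verify that the action of the left vertex on its bosonic state is truly independent of that aux colour, so the colour sum can be pushed through to the right vertex and consumed there via identity (3) or (4). This colour-independence is exactly what identities (2) and (4) of Theorem~\ref{amazing} provide; without it, each aux-summation crossing a vertex would introduce a spurious factor of $r$ and the reduction to rank-$1$ weights would fail.
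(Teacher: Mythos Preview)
Your proposal is correct and follows essentially the same approach as the paper's own proof: both arguments use the four identities of Theorem~\ref{amazing} vertex-by-vertex, exploiting that the result of acting on an aggregated state $\Ket{M}$ is again (a scalar times) an aggregated state independent of the right-aux colour, so that the colour sum can be pushed through and iterated across the lattice. Your row-by-row invariant $\bigotimes_j \Ket{M_j^{(a)}}^{(j)}$ makes explicit what the paper compresses into the sentence ``this result can be readily iterated over any rectangular lattice of vertices in the coloured model''.
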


\begin{proof}
The relations \eqref{1}--\eqref{4} have a simple combinatorial meaning. Consider a single vertex in the model of Section \ref{coloured-model}, and do the following:
\begin{enumerate}
\item Take the left edge to be {\bf (a)} unoccupied or {\bf (b)} occupied, in which case a sum is taken over all possible colours,
\item Sum the bottom edge over all possible ways of colouring $M$ particles,
\item Fix the right edge to be {\bf (c)} unoccupied or {\bf (d)} occupied, in which case {\it any} colour may be chosen,
\item Fix the top edge to {\it any} set of $N$ coloured particles.
\end{enumerate}
Theorem \ref{amazing} says that this sum is equal to the Boltzmann weight of the corresponding vertex in the uncoloured model (the model in Section \ref{uncoloured-model}). In other words, it is equal to the weight of the vertex whose
\begin{enumerate}
\item Left edge is {\bf (a)} unoccupied or {\bf (b)} occupied, respectively,
\item Bottom edge contains $M$ particles,
\item Right edge is {\bf (c)} unoccupied or {\bf (d)} occupied, respectively,
\item Top edge contains $N$ coloured particles.
\end{enumerate}
Furthermore, this result can be readily iterated over any rectangular lattice of vertices in the coloured model, so long as each external left and bottom edge is summed as we have described. The partition function \eqref{P-PF} satisfies these criteria: each left edge is summed over all possible unoccupied/occupied states, while each bottom edge is summed (trivially) over all ways of colouring $0$ particles. The result \eqref{colour-ind} follows immediately.

\end{proof}

\subsection{An example of Proposition \ref{prop-colour}}

To illustrate more clearly the simple meaning of \eqref{colour-ind}, we give here an explicit example for the running case $n=4$, $\lambda = (4,3,3,1)$. A permissible path configuration $\mathcal{P} \in \mathcal{P}_n(\lambda)$ in that case would be
\begin{align*}
\begin{array}{|c|c|}
\hline
\mathcal{P}
&
W_{\mathcal{P}}
\\
\hline
\begin{tikzpicture}[scale=0.6]
\foreach\y in {1,...,4}{
\draw[dotted] (0,\y) -- (5,\y);
\draw[thick,->] (0,\y) -- (1,\y);
\node[left] at (0,5-\y) {$x_\y$};
}
\foreach\x in {1,...,4}{
\draw[dotted] (\x,0) -- (\x,5);
\node[above] at (\x,5) {\scriptsize $m_\x$};
}
\draw[thick] (0,4) -- (1,4); \draw[thick,->] (1,4) -- (1,5);
\draw[thick] (0,3) -- (1,3) -- (1,4) -- (2.85,4); \draw[thick,->] (2.85,4) -- (2.85,5);
\draw[thick] (0,2) -- (2,2) -- (2,3) -- (3,3) -- (3,4); \draw[thick,->] (3,4) -- (3,5);
\draw[thick] (0,1) -- (3.15,1) -- (3.15,4) -- (4,4); \draw[thick,->] (4,4) -- (4,5);
\end{tikzpicture}
&
\begin{tikzpicture}[scale=0.6,baseline=-0.1cm]
\node at (0,4.15) {\scriptsize{$(x_1-vt)(x_1-v)(x_1-vt^2)x_1(1-t)$}};
\node at (0,3.05) {\scriptsize{$x_2(1-t)(1-uv) x_2(1-t^2)$}};
\node at (0,1.95) {\scriptsize{$(x_3-v)x_3(1-t)(1-u t x_3)$}};
\node at (0,0.85) {\scriptsize{$(x_4-v)(x_4-v) x_4(1-t)$}};
\end{tikzpicture}
\\
\hline
\end{array}
\end{align*}
where we have written the corresponding Boltzmann weight $W_{\mathcal{P}}$ alongside. Proposition \ref{prop-colour} states that the same result will be obtained if we sum over all coloured configurations $\mathcal{C}$ in the higher-rank model, whose profile $\mathcal{C}^{*}$ matches $\mathcal{P}$. There are six such configurations:
\begin{align*}
\begin{array}{|c|c|}
\hline
\mathcal{C}
&
\widetilde{W}_{\mathcal{C}}
\\
\hline
\begin{tikzpicture}[scale=0.6]
\foreach\y in {1,...,4}{
\draw[dotted] (0,\y) -- (5,\y);
\node[left] at (0,5-\y) {$x_\y$};
}
\draw[thick,green!65!black,->] (0,4) -- (1,4); \draw[thick,green!65!black] (0,4) -- (1,4);
\draw[thick,red!65!black,->] (0,3) -- (1,3); \draw[thick,red!65!black] (0,3) -- (1,3) -- (1,4) -- (2.85,4);
\draw[thick,red!65!black,->] (0,2) -- (1,2);  \draw[thick,red!65!black] (0,2) -- (2,2) -- (2,3) -- (3,3) -- (3,4);
\draw[thick,blue!65!black,->] (0,1) -- (1,1); \draw[thick,blue!65!black] (0,1) -- (3.15,1) -- (3.15,4) -- (4,4);
\foreach\x in {1,...,4}{
\draw[dotted] (\x,0) -- (\x,5);
\node[above] at (\x,5) {\scriptsize $m_\x$};
}
\draw[thick,green!65!black,->] (1,4) -- (1,5); 
\draw[thick,red!65!black,->] (2.85,4) -- (2.85,5);
\draw[thick,red!65!black,->] (3,4) -- (3,5);
\draw[thick,blue!65!black,->] (4,4) -- (4,5);
\end{tikzpicture}
&
\begin{tikzpicture}[scale=0.6,baseline=-0.1cm]
\node at (0,4.15) {\scriptsize{$ v(1-t)(x_1-v) v(1-t^2)x_1(1-t)$}};
\node at (0,3.05) {\scriptsize{$ x_2(1-t)(1-uv) x_2(1-t)t$}};
\node at (0,1.95) {\scriptsize{$(x_3-v)x_3(1-t)(1-u t x_3)$}};
\node at (0,0.85) {\scriptsize{$(x_4-v)(x_4-v) x_4(1-t)$}};
\end{tikzpicture}
\\
\hline
\begin{tikzpicture}[scale=0.6]
\foreach\y in {1,...,4}{
\draw[dotted] (0,\y) -- (5,\y);
\node[left] at (0,5-\y) {$x_\y$};
}
\draw[thick,green!65!black,->] (0,4) -- (1,4); \draw[thick,green!65!black] (0,4) -- (1,4);
\draw[thick,red!65!black,->] (0,3) -- (1,3); \draw[thick,red!65!black] (0,3) -- (1,3) -- (1,4) -- (2.85,4);
\draw[thick,blue!65!black,->] (0,2) -- (1,2); \draw[thick,blue!65!black] (0,2) -- (2,2) -- (2,3) -- (3.15,3) -- (3.15,4) -- (4,4);
\draw[thick,red!65!black,->] (0,1) -- (1,1); \draw[thick,red!65!black] (0,1) -- (3,1) -- (3,4);
\foreach\x in {1,...,4}{
\draw[dotted] (\x,0) -- (\x,5);
\node[above] at (\x,5) {\scriptsize $m_\x$};
}
\draw[thick,green!65!black,->] (1,4) -- (1,5);
\draw[thick,red!65!black,->] (2.85,4) -- (2.85,5);
\draw[thick,red!65!black,->] (3,4) -- (3,5);
\draw[thick,blue!65!black,->] (4,4) -- (4,5);
\end{tikzpicture}
&
\begin{tikzpicture}[scale=0.6,baseline=-0.1cm]
\node at (0,4.15) {\scriptsize{$ v(1-t)(x_1-v) v(1-t^2) x_1(1-t)$}};
\node at (0,3.05) {\scriptsize{$ x_2(1-t)(1-uv) x_2(1-t)$}};
\node at (0,1.95) {\scriptsize{$ (x_3-v) x_3(1-t)(1-u t x_3)$}};
\node at (0,0.85) {\scriptsize{$ (x_4-v)(x_4-v) x_4(1-t)$}};
\end{tikzpicture}
\\
\hline
\begin{tikzpicture}[scale=0.6]
\foreach\y in {1,...,4}{
\draw[dotted] (0,\y) -- (5,\y);
\node[left] at (0,5-\y) {$x_\y$};
}
\draw[thick,green!65!black,->] (0,4) -- (1,4); \draw[thick,green!65!black] (0,4) -- (1,4);
\draw[thick,blue!65!black,->] (0,3) -- (1,3); \draw[thick,blue!65!black] (0,3) -- (1,3) -- (1,4) -- (4,4);
\draw[thick,red!65!black,->] (0,2) -- (1,2);  \draw[thick,red!65!black] (0,2) -- (2,2) -- (2,3) -- (2.85,3) -- (2.85,4);
\draw[thick,red!65!black,->] (0,1) -- (1,1); \draw[thick,red!65!black] (0,1) -- (3,1) -- (3,4);
\foreach\x in {1,...,4}{
\draw[dotted] (\x,0) -- (\x,5);
\node[above] at (\x,5) {\scriptsize $m_\x$};
}
\draw[thick,green!65!black,->] (1,4) -- (1,5);
\draw[thick,red!65!black,->] (2.85,4) -- (2.85,5);
\draw[thick,red!65!black,->] (3,4) -- (3,5);
\draw[thick,blue!65!black,->] (4,4) -- (4,5);
\end{tikzpicture}
&
\begin{tikzpicture}[scale=0.6,baseline=-0.1cm]
\node at (0,4.15) {\scriptsize{$ v(1-t)(x_1-v) (x_1-v) x_1(1-t)$}};
\node at (0,3.05) {\scriptsize{$ x_2(1-t)(1-uv) x_2(1-t^2)$}};
\node at (0,1.95) {\scriptsize{$ (x_3-v) x_3(1-t)(1-u t x_3)$}};
\node at (0,0.85) {\scriptsize{$ (x_4-v)(x_4-v) x_4(1-t)$}};
\end{tikzpicture}
\\
\hline
\end{array}
\begin{array}{|c|c|}
\hline
\mathcal{C}
&
\widetilde{W}_{\mathcal{C}}
\\
\hline
\begin{tikzpicture}[scale=0.6]
\foreach\y in {1,...,4}{
\draw[dotted] (0,\y) -- (5,\y);
\node[left] at (0,5-\y) {$x_\y$};
}
\draw[thick,red!65!black,->] (0,4) -- (1,4); \draw[thick,red!65!black] (0,4) -- (2.85,4);
\draw[thick,green!65!black,->] (0,3) -- (1,3); \draw[thick,green!65!black] (0,3) -- (1,3) -- (1,4);
\draw[thick,red!65!black,->] (0,2) -- (1,2);  \draw[thick,red!65!black] (0,2) -- (2,2) -- (2,3) -- (3,3) -- (3,4); 
\draw[thick,blue!65!black,->] (0,1) -- (1,1); \draw[thick,blue!65!black] (0,1) -- (3.15,1) -- (3.15,4) -- (4,4);
\foreach\x in {1,...,4}{
\draw[dotted] (\x,0) -- (\x,5);
\node[above] at (\x,5) {\scriptsize $m_\x$};
}
\draw[thick,green!65!black,->] (1,4) -- (1,5);
\draw[thick,red!65!black,->] (2.85,4) -- (2.85,5);
\draw[thick,red!65!black,->] (3,4) -- (3,5);
\draw[thick,blue!65!black,->] (4,4) -- (4,5);
\end{tikzpicture}
&
\begin{tikzpicture}[scale=0.6,baseline=-0.1cm]
\node at (0,4.15) {\scriptsize{$ (x_1-v)(x_1-v) v(1-t^2) x_1(1-t)$}};
\node at (0,3.05) {\scriptsize{$ x_2(1-t) (1-uv) x_2(1-t)t$}};
\node at (0,1.95) {\scriptsize{$ (x_3-v) x_3(1-t)(1-u t x_3)$}};
\node at (0,0.85) {\scriptsize{$ (x_4-v) (x_4-v) x_4(1-t)$}};
\end{tikzpicture}
\\
\hline
\begin{tikzpicture}[scale=0.6]
\foreach\y in {1,...,4}{
\draw[dotted] (0,\y) -- (5,\y);
\node[left] at (0,5-\y) {$x_\y$};
}
\draw[thick,red!65!black,->] (0,4) -- (1,4); \draw[thick,red!65!black] (0,4) -- (2.85,4);
\draw[thick,green!65!black,->] (0,3) -- (1,3); \draw[thick,green!65!black] (0,3) -- (1,3) -- (1,4);
\draw[thick,blue!65!black,->] (0,2) -- (1,2); \draw[thick,blue!65!black] (0,2) -- (2,2) -- (2,3) -- (3.15,3) -- (3.15,4) -- (4,4);
\draw[thick,red!65!black,->] (0,1) -- (1,1); \draw[thick,red!65!black] (0,1) -- (3,1) -- (3,4);
\foreach\x in {1,...,4}{
\draw[dotted] (\x,0) -- (\x,5);
\node[above] at (\x,5) {\scriptsize $m_\x$};
}
\draw[thick,green!65!black,->] (1,4) -- (1,5);
\draw[thick,red!65!black,->] (2.85,4) -- (2.85,5);
\draw[thick,red!65!black,->] (3,4) -- (3,5);
\draw[thick,blue!65!black,->] (4,4) -- (4,5);
\end{tikzpicture}
&
\begin{tikzpicture}[scale=0.6,baseline=-0.1cm]
\node at (0,4.15) {\scriptsize{$ (x_1-v) (x_1-v) v(1-t^2) x_1(1-t) $}};
\node at (0,3.05) {\scriptsize{$ x_2(1-t)(1-uv) x_2(1-t)$}};
\node at (0,1.95) {\scriptsize{$ (x_3-v) x_3(1-t)(1-u t x_3)$}};
\node at (0,0.85) {\scriptsize{$ (x_4-v) (x_4-v) x_4(1-t)$}};
\end{tikzpicture}
\\
\hline
\begin{tikzpicture}[scale=0.6]
\foreach\y in {1,...,4}{
\draw[dotted] (0,\y) -- (5,\y);
\node[left] at (0,5-\y) {$x_\y$};
}
\draw[thick,blue!65!black,->] (0,4) -- (1,4); \draw[thick,blue!65!black] (0,4) -- (4,4);
\draw[thick,green!65!black,->] (0,3) -- (1,3); \draw[thick,green!65!black] (0,3) -- (1,3) -- (1,4);
\draw[thick,red!65!black,->] (0,2) -- (1,2);  \draw[thick,red!65!black] (0,2) -- (2,2) -- (2,3) -- (2.85,3) -- (2.85,4);
\draw[thick,red!65!black,->] (0,1) -- (1,1); \draw[thick,red!65!black] (0,1) -- (3,1) -- (3,4);
\foreach\x in {1,...,4}{
\draw[dotted] (\x,0) -- (\x,5);
\node[above] at (\x,5) {\scriptsize $m_\x$};
}
\draw[thick,green!65!black,->] (1,4) -- (1,5);
\draw[thick,red!65!black,->] (2.85,4) -- (2.85,5);
\draw[thick,red!65!black,->] (3,4) -- (3,5);
\draw[thick,blue!65!black,->] (4,4) -- (4,5);
\end{tikzpicture}
&
\begin{tikzpicture}[scale=0.6,baseline=-0.1cm]
\node at (0,4.15) {\scriptsize{$ (x_1-v) (x_1-v)(x_1-v) x_1(1-t)$}};
\node at (0,3.05) {\scriptsize{$ x_2(1-t)(1-uv) x_2(1-t^2)$}};
\node at (0,1.95) {\scriptsize{$ (x_3-v) x_3(1-t) (1-u t x_3)$}};
\node at (0,0.85) {\scriptsize{$ (x_4-v)(x_4-v) x_4(1-t)$}};
\end{tikzpicture}
\\
\hline
\end{array}
\end{align*}
Summing the six weights $\widetilde{W}_C$, we recover $W_{\mathcal{P}}$. 

\section{Construction of the $L$-matrix}
\label{proofrll}
In this section we will show how to construct the integrability objects $R(x,y)$ and $L(x,y)$ of the quantum group $\hat{\mathfrak{g}}=U_{\tau}(A_r^{(1)})$ using the formula of Jimbo \cite{Jimbo86a}. As a result the statement of Theorem \ref{rll} will follow.

The quantum affine Lie algebra $\hat{\mathfrak{g}}=U_{\tau}(A_r^{(1)})$ possesses the universal $R$-matrix $\mathcal{R}\in \hat{\mathfrak{g}}\otimes \hat{\mathfrak{g}}$ \cite{Dr87,Jimbo86b} which satisfies the Yang--Baxter equation 
\begin{align}\label{URYB}
\mathcal{R}_{1,2}\mathcal{R}_{1,3}\mathcal{R}_{2,3}=
\mathcal{R}_{2,3}\mathcal{R}_{1,3}\mathcal{R}_{1,2}.
\end{align}
This is an identity in $\hat{\mathfrak{g}}\otimes \hat{\mathfrak{g}}\otimes \hat{\mathfrak{g}}$ and the indices of $\mathcal{R}$  specify which two copies of $\hat{\mathfrak{g}}$ it belongs to. Let $\hat{\rho}_x$ be the fundamental representation of $\hat{\mathfrak{g}}$ on the space $V$ obtained via Jimbo homomorphism and $\pi_x$ be the homomorphism between $\hat{\mathfrak{g}}$ and its finite version $\mathfrak{g}$, $\pi_x: \hat{\mathfrak{g}} \rightarrow  \mathfrak{g} \otimes \mathbb{C}[x,x^{-1}]$, where $\mathfrak{g}=U'_{\tau}(sl_{r+1})$ is a certain refinement of $U_{\tau}(sl_{r+1})$  (the details are given below). Then applying $\hat{\rho}_x\otimes \hat{\rho}_y \otimes \pi_z$  to (\ref{URYB}) we get
\begin{align}\label{URLL}
R_{1,2}(x,y)\mathcal{L}_{1,3}(x,z)\mathcal{L}_{2,3}(y,z)=
\mathcal{L}_{2,3}(y,z)\mathcal{L}_{1,3}(x,z)R_{1,2}(x,y),
\end{align}
where $R(x,y)\in V\otimes V$ is the well known $R$-matrix of $\hat{\mathfrak{g}}$ and $\mathcal{L}(x,y)\in V\otimes \mathfrak{g}$ is the $L$-operator of the fundamental representation $V$. The Jimbo homomorphism can be decomposed as $\hat{\rho}_x= \rho\circ \pi_x$ where $\rho: \mathfrak{g}\rightarrow \text{End}(V)$. There is an important class of  homomorphisms $\eta$ which send $\mathfrak{g}$ to an infinite dimensional space $\mathcal{A}$. The homomorphisms $\rho$ and $\eta$ define the $R$-matrix and $L$-matrices\footnote{There are several homomorphisms $\mathfrak{g}\rightarrow \mathcal{A}$, however, for the purpose of the present paper we will be concerned with a specific one and call it $\eta$. The first example of such homomorphism for the algebra $\mathfrak{g}$ was given in \cite{Hayashi}.}
\begin{align}
&R(x,y) = \left( \text{id}\otimes \rho\right) \mathcal{L}(x,y), \label{Rmat}\\
&L(x,y) = \left(\text{id}\otimes \eta\right) \mathcal{L}(x,y). \label{Lmat}
\end{align}
If we  apply $\text{id}\otimes \text{id}\otimes \rho $ to (\ref{URLL}) we find the usual Yang--Baxter equation, while applying $\text{id}\otimes \text{id}\otimes \eta $ leads us to the RLL intertwining relation
\begin{align}
&R_{1,2}(x,y)R_{1,3}(x,z)R_{2,3}(y,z)=
R_{2,3}(y,z)R_{1,3}(x,z)R_{1,2}(x,y),\label{RYB}\\
&R_{1,2}(x,y)L_{1,3}(x,z)L_{2,3}(y,z)=
L_{2,3}(y,z)L_{1,3}(x,z)R_{1,2}(x,y).\label{RLL}
\end{align}
Recall the permutation operator $\mathcal{P}$ which transposes the tensor components $\mathcal{P} (a\otimes b) =(b\otimes a)$ for $a,b \in \hat{\mathfrak{g}}$, and is denoted by $P$ in the fundamental representation $V$. Acting with $P$ on (\ref{RYB}) and (\ref{RLL}) we get
\begin{align}
&\check{R}_{1,2}(x,y)R_{1,3}(x,z)R_{2,3}(y,z)=
R_{2,3}(y,z)R_{1,3}(x,z)\check{R}_{1,2}(x,y),\label{RcYB}\\
&\check{R}_{1,2}(x,y)L_{1,3}(x,z)L_{2,3}(y,z)=
L_{1,3}(y,z)L_{2,3}(x,z)\check{R}_{1,2}(x,y),\label{RcLL}
\end{align}
where $\check{R}(x,y)=P R(x,y)$.
We can omit indices in the last equation and recover (\ref{rll-eq})
\begin{align}
\check{R}(x,y)L(x,z)\otimes L(y,z)=
L(y,z)\otimes L(x,z)\check{R} (x,y).\label{RcLL2}
\end{align}

The universal $R$-matrix $\mathcal{R}$ of a quantum affine Lie algebra $A$, restricted to the trigonometric solutions \cite{Jimbo86b}, is defined by the commutation with the coproduct $\Delta: A\rightarrow A\otimes A$ of the algebra 
\begin{align}\label{RDel}
\mathcal{R}\Delta(a)=
\Delta'(a) \mathcal{R}, \qquad \forall a\in A,
\end{align}
where $\Delta'$ is the opposite coproduct $\mathcal{P}(\Delta(a))=\Delta'(a)$. Applying $\hat{\rho}_x\otimes \pi_y$ to (\ref{RDel}) we find the equation  
\begin{align}\label{LDel}
\mathcal{L}(x,y)\Delta(a)=
\Delta'(a) \mathcal{L}(x,y),
\end{align}
for the operator $\mathcal{L}$. For simplicity we keep the same symbol for the coproduct and its opposite. 
The explicit form of the operator $\mathcal{L}$ for the Hopf algebra $\hat{\mathfrak{g}}$ with the standard coproduct $\Delta$ was given by Jimbo in \cite{Jimbo86a}. In the present paper we are dealing with stochastic vertex models which correspond to a twisted Hopf algebra $\hat{\mathfrak{g}}$ with the coproduct $\Delta^F$ twisted by an element $F$. More concretely, a twist $F$ is an invertible element (of a certain extension) of $\hat{\mathfrak{g}}\otimes \hat{\mathfrak{g}}$, written as 
\begin{align}\label{Ftwist}
F=\sum_i f_i \otimes f^i,
\end{align}
defines a new Hopf structure $\Delta^F$ of $\hat{\mathfrak{g}}$ (see \cite{Dr90,Resh90})
\begin{align}\label{DelF}
\Delta^F=F\Delta F^{-1}.
\end{align}
For an appropriately chosen $F$ the twisted Hopf algebra  $\hat{\mathfrak{g}}=U_{\tau}(A_r^{(1)})$ with the coproduct $\Delta^F$ defines the $U_{\tau}(A_r^{(1)})$  vertex models with a stochastic $R$-matrix. The resulting twisted Hopf algebra possesses the universal matrix $\mathcal{R}^F$ which, when restricted by $\hat{\rho}_x\otimes \pi_y$, gives the operator $\mathcal{L}^F(x,y)$ leading to the stochastic $R$-matrix and the associated $L$-matrices under the homomorphisms $\rho$ and $\eta$. Combining (\ref{RDel}) and (\ref{DelF}) we obtain
\begin{align*}
&F'\mathcal{R}F^{-1} \Delta^F(a) =
\Delta^F{}'(a)F' \mathcal{R}F^{-1},\\
&\mathcal{R}^F=F' \mathcal{R}F^{-1}.
\end{align*}
Set $\tilde{\Phi}=\hat{\rho}_x\otimes \pi_y (F')$ and $\Phi^{-1}=\hat{\rho}_x\otimes \pi_y (F^{-1})$, (\ref{LDel}) defines the $L$-operator
\begin{align}
&\mathcal{L}^F(x,y)\Delta^F(a)=
\Delta^F{}' (a)\mathcal{L}^F(x,y),\label{LFDel}\\
&\mathcal{L}^F(x,y)=\hat{\rho}_x\otimes \pi_y (\mathcal{R}^F)=\tilde{\Phi} \mathcal{L}(x,y) \Phi^{-1}, \label{FLF}
\end{align}
where we assumed again $\hat{\rho}_x\otimes \pi_y \left(\Delta^F(a)\right)$ and similarly for $\Delta^F{'}$.

In the next subsections we give the definitions of the algebras $\hat{\mathfrak{g}}$ and $\mathfrak{g}$, the Jimbo's formula for $\mathcal{L}$ and its twisting $\mathcal{L}^F$. We then give the explicit homomorphism $\eta$ and write the resulting matrix $L(x,y)$ together with the stochastic matrix $R(x,y)$ as images of  $\eta$ and $\rho$ applied to $\mathcal{L}^F$, respectively. 

\subsection{Definitions}\label{subsect11}
Let us recall the construction of Jimbo \cite{Jimbo86a}. Fix a parameter $\tau\in \mathbb{C}$, $|\tau|<1$. We start with the algebra $U_{\tau}(sl_{r+1})$ generated by the elements $e_i, f_i$ and $K_i,K^{-1}_i$ ($i=1,\dots,r$) which satisfy the following relations 
\begin{align}\label{CRsl}
&K_i K^{-1}_i =K^{-1}_i  K_i =1, \qquad [K_i,K_j]=0,  \nonumber\\ 
&K_i e_j  = \tau^{ C_{i,j}}e_j K_i, \qquad  K_i f_j  = t^{-C_{i,j}}f_j K_i,  \nonumber \\
&[e_i,f_j]= \delta_{i,j} \frac{K_i-K_i^{-1}}{\tau-\tau^{-1}},  \nonumber\\
&e_i^2 e_{i \pm 1}- (\tau+{\tau^{-1}})  e_{i}e_{i \pm 1} e_{i}+ e_{i\pm 1} e_i^2 =0,\quad 1\leq i,i\pm 1 \leq r \nonumber \\
&f_i^2 f_{i \pm 1}- (\tau+{\tau^{-1}})  f_{i}f_{i \pm 1} f_{i}+ f_{i\pm 1} f_i^2 =0,\quad 1\leq i,i\pm 1 \leq r. 
\end{align}
Where $C_{i,j}$ are the matrix elements of the Cartan matrix $C_{i,i}=2$, $C_{i,i+1}=C_{i,i-1}=-1$ and $C_{i,j}=0$ for $|i-j|>1$. We define the algebra $\mathfrak{g}=U'_{\tau}(sl_{r+1})$ by adding the elements 
$\kappa_i^{\pm 1}$ ($i=0,\dots,r$) to the algebra $U_{\tau}(sl_{r+1})$ such that $K_i=\kappa_{i-1}\kappa_{i}^{-1}$. It is convenient to introduce also the elements $\epsilon_i$ ($i=0,\dots,r$), related to $\kappa_i$ by $\kappa_i=\tau^{\epsilon_i}$. The new elements $\kappa_j$ commute with $e_i$ and $f_i$ for $j<i-1$ and $j>i$, otherwise
\begin{align*}
&\kappa_i e_i =\tau^{-1}e_i \kappa_i,	&\quad& 		\kappa_i f_i =\tau f_i \kappa_i, \\
&\kappa_{i-1} e_i =\tau e_i \kappa_{i-1},&\quad&		\kappa_{i-1} f_i =\tau^{-1} f_i \kappa_{i-1}.
\end{align*}
The element $\kappa_0 \kappa_1 \cdots \kappa_r =c$ belongs to the center of the algebra. The algebra $\mathfrak{g}$ is equipped with the coproduct $\Delta : U \rightarrow \mathfrak{g}\otimes \mathfrak{g}$   
\begin{align*}
&\Delta (e_i) = K_i^{1/2}  \otimes e_i+e_i  \otimes K_i^{-1/2},\qquad \Delta (f_i) = K_i^{1/2}  \otimes f_i+f_i  \otimes K_i^{-1/2}, \qquad \Delta (\kappa_i^{\pm 1}) = \kappa_i^{\pm 1} \otimes \kappa_i^{\pm 1}.
\end{align*}
Next we define the higher root elements $e_{i,j}\in \mathfrak{g}$ ($0\leq i\neq j \leq r$)
\begin{align}
&e_{i-1,i} = e_i,\qquad e_{i,i-1} = f_i, \label{hroots1} \\
&e_{i,j} = e_{i,k}e_{k,j} - \tau  e_{k,j}e_{i,k}, &\quad& \text{for}~i>j, \label{hroots2}\\
&e_{i,j} = e_{i,k}e_{k,j} - \tau^{-1}  e_{k,j}e_{i,k}, &\quad& \text{for}~i<j.\label{hroots3}
\end{align}
It is easy to verify that the higher roots $e_{i,j}$ commute with $\kappa_l$ for $l\neq i,j$ and otherwise 
\begin{align*}
&\kappa_i e_{i,j} =\tau e_{i,j} \kappa_i , &\quad& 	\kappa_j e_{i,j} =\tau^{-1} e_{i,j} \kappa_j,  	&\quad& {i<j},\\
&\kappa_i e_{i,j} =\tau^{-1} e_{i,j} \kappa_i , &\quad& 	\kappa_j e_{i,j} =\tau e_{i,j} \kappa_j,   &\quad& {i>j}.
\end{align*}
They also enjoy the Cartan--Weyl basis type property
\begin{align*}
&[e_{i,j},e_{j,i}]= \frac{\kappa_{i}\kappa_j^{-1}-\kappa_{i}^{-1}\kappa_j}{\tau-\tau^{-1}}.  \nonumber\\
\end{align*}
The quantum affine Lie algebra $\hat{\mathfrak{g}}$ is generated by the elements $\tilde{e}_i, \tilde{f}_i$ and $\tilde{K}_i^{\pm 1}$ ($i=0,\dots,r$) satisfying (\ref{CRsl}) with the Cartan matrix $\hat{C}_{i,j}$ of the algebra $A_r^{(1)}$. The algebra homomorphism $\pi_x: \hat{\mathfrak{g}} \rightarrow  \mathfrak{g} \otimes \mathbb{C}[x,x^{-1}]$ is given by
\begin{align}
&\pi_x(\tilde{e}_0)=x e_0,	
&\quad& \pi_x(\tilde{f}_0)= x^{-1}f_0,	
&\quad& \pi_x(\tilde{K}_0)= \kappa_r\kappa_0^{-1},\\
&\pi_x(\tilde{e}_i)=e_i,	&\quad& \pi_x(\tilde{f}_i)=f_i, &\quad& \pi_x(\tilde{K}_i)=K_i  \quad \text{for}~i>0,\\
&e_0=\tau \kappa_0^{-1}\kappa_r^{-1}e_{r,0},&\qquad& f_0=\tau^{-1} \kappa_0 \kappa_r e_{0,r}.
\end{align}

\subsection{$L$-operator}\label{subsect12}
The fundamental representation $\rho: \mathfrak{g}\rightarrow \text{End}(V)$ with $V=\mathbb{C}^{r+1}$ is defined by  
\begin{align}
e_{i,j}\mapsto E_{i,j}, \qquad \kappa_i \mapsto \mathbb{I}+(\tau-1)E_{i,i},  \qquad \epsilon_i \mapsto E_{i,i},
\end{align}
where $ \mathbb{I}$ is the identity matrix in $V$, $E_{i,j}$ are the matrix units with one at position $i,j$ and zero elsewhere and the indices $i,j$ run from $0$ to $r$. Assuming that $\Delta(a)$ is the image of the coproduct of $\mathfrak{g}$ under $\rho\otimes \text{Id}$, then the $L$-operator $\mathcal{L}$ of the fundamental representation satisfies the following linear equations 
\begin{align}
&\mathcal{L}(x,y)\Delta(a)=\Delta'(a) \mathcal{L}(x,y), \quad a\in \mathfrak{g},\label{LDel1}\\
&\mathcal{L}(x,y)D(e_0)=D'(e_0)\mathcal{L}(x,y), \label{LDel2}\\
&\mathcal{L}(x,y)D(f_0)=D'(f_0)\mathcal{L}(x,y),\label{LDel3}
\end{align}
where we defined 
\begin{align*}
&D(e_0)=y \kappa_r^{1/2}\kappa_0^{-1/2} \otimes e_0 + x e_0\otimes \kappa_r^{-1/2}\kappa_0^{1/2},\qquad
D'(e_0)= y \kappa_r^{-1/2}\kappa_0^{1/2} \otimes e_0 + x e_0\otimes \kappa_r^{1/2}\kappa_0^{-1/2}, \\
&D(f_0)= x \kappa_r^{1/2}\kappa_0^{-1/2} \otimes f_0 + y f_0\otimes \kappa_r^{-1/2}\kappa_0^{1/2},\qquad 
D'(f_0)= x \kappa_r^{-1/2}\kappa_0^{1/2} \otimes f_0 + y f_0\otimes \kappa_r^{1/2}\kappa_0^{-1/2}.
\end{align*}
In \cite{Jimbo86a} it was shown that (\ref{LDel1})--(\ref{LDel3}) can be solved by the $L$-operator $\mathcal{L}(x,y)$, which we write in the from adopted for our purposes
\begin{align*}
&\mathcal{L}(x,y)=\sum_{0\leq i,j\leq r} E_{j,i}\otimes \hat{E}_{i,j}(x,y), \\
&\hat{E}_{i,j}(x,y)=
\left\{
\begin{array}{ll}
\tau^{-1/2}~x~\kappa_i^{1/2}\kappa_j^{1/2} e_{i,j},  \qquad  &  i<j
\\
\\
(\tau-\tau^{-1})^{-1} \left( x \kappa_i- y \kappa_i^{-1}\right),  \qquad  & i=j
\\
\\
\tau^{1/2}~y ~\kappa_i^{-1/2}\kappa_j^{-1/2} e_{i,j},  \qquad  &   i>j.
\end{array}
\right.
\end{align*}

As discussed above, the next step towards the stochastic $R$-matrix and the associated $L$-matrix is to deform the Hopf structure of the algebra with a twist $F$. We choose the twist to be
\begin{align}
F=\tau^{-\sum_{j<i} \epsilon_i\otimes \epsilon_j}.
\end{align}
Applying the homomorphism $\hat{\rho}_x\otimes \pi_y$ we find
\begin{align}
&\tilde{\Phi}=\hat{\rho}_x\otimes \pi_y (F')=\sum_{j=0}^r E_{j,j} \otimes \prod_{l=j+1}^r \kappa_l^{-1} \\
&\Phi=\hat{\rho}_x\otimes \pi_y (F^{-1})=\sum_{i=0}^r E_{i,i} \otimes\prod_{l=0}^{i-1} \kappa_l^{-1}.
\end{align}
The operator $\mathcal{L}^F(x,y)$ is a solution of (\ref{LDel1})-(\ref{LDel3}) were $\Delta$, $\Delta'$, $D$ and $D'$ are substituted with their twisted versions, implying the homomorphism $\hat{\rho}_x\otimes \pi_y$ we have 
\begin{align}
&\Delta^F= \Phi \Delta \Phi^{-1},	&\qquad&  \Delta^F{}'= \tilde{\Phi} \Delta' \tilde{\Phi}^{-1}, \\
&D^F(X)= \Phi D(X) \Phi^{-1},	&\qquad&  D^F{}'(X)= \tilde{\Phi} D'(X) \tilde{\Phi}^{-1}, \qquad X=e_0,f_0.
\end{align}
With these redefinitions equations (\ref{LDel1})-(\ref{LDel3}) lead to the following form for the operator $\mathcal{L}^F(x,y)$
\begin{align}
&\mathcal{L}^F(x,y)=\tilde{\Phi} \mathcal{L}(x,y) \Phi^{-1}=\sum_{0\leq i,j\leq r} E_{j,i}\otimes  \hat{E}^F_{i,j}(x,y), \label{Lstoch} \\
&\hat{E}^F_{i,j}(x,y)= \prod_{l=j+1}^r \kappa_l^{-1} \hat{E}_{i,j}(x,y) \prod_{l=0}^{i-1} \kappa_l. \nonumber
\end{align}
Using the fact that $\kappa_0 \kappa_1 \cdots \kappa_r =c$ we can write $\hat{E}^F_{i,j}(x,y)$ as 
\begin{align}\label{Estoch}
\hat{E}^F_{i,j}(x,y)=c\times 
\left\{
\begin{array}{ll}
 \tau^{-1/2}~x~\kappa_i^{1/2}\kappa_j^{1/2} \prod_{l=j+1}^{r}\kappa_l^{-2} \prod_{l=i}^j\kappa_l^{-1} e_{i,j},  \qquad  &  i<j
\\
\\
(\tau-\tau^{-1})^{-1} \left( x- y  \kappa_i^{-2}\right)\prod_{l=i+1}^{r}\kappa_l^{-2},  \qquad  & i=j
\\
\\
\tau^{3/2}~y ~\kappa_i^{-1/2}\kappa_j^{-1/2} \prod_{l=i}^r\kappa_l^{-2} \prod_{l=j+1}^{i-1}\kappa_l^{-1} e_{i,j},  \qquad  &   i>j.
\end{array}
\right.
\end{align}

\subsection{Homomorphism to the $t$-Oscillator algebra}\label{subsect13}
In this section we provide a homomorphism $\eta$ which takes $\mathfrak{g}$ to $\mathcal{A}=\mathcal{A}_1\otimes\dots\otimes \mathcal{A}_r$, where $\mathcal{A}_i$ is the algebra  of $\tau^2$-oscillators $\mathcal{A}_i=\{a_i,a_i^{\dag},h_i\}$ with defining relations 
\begin{align}
&h_i a_i = \tau^{-2}a_i h_i,\qquad 	h_i a^{\dag}_i = \tau^2 a^{\dag}_i h_i, \nonumber \\
&a_i a_i^{\dag} = 1-\tau^2 h_i,\qquad a_i^{\dag} a_i= 1-h_i.
\label{tosc}
\end{align}
The homomorphism $\eta$ reads 
\begin{align}
& \eta(\kappa_0 )= c^{1/2} \prod_{i=1}^r h_i^{1/2},									&\qquad& 
\eta(\kappa_i )=h_i^{-1/2}, \qquad \text{for}~i=1,\dots,r, \\
&\eta (e_1) = \frac{c^{-1/2}\tau^{1/2}}{\tau-\tau^{-1}} h_0^{-1/4}h_1^{-1/4} a^{\dag}_1, &\qquad&
\eta (f_1) = \frac{c^{1/2}\tau^{-3/2}}{\tau-\tau^{-1}} h_0^{-1/4}h_1^{-1/4}(1-h_0) a_1, \\
& \eta (e_i) = \frac{\tau^{1/2}}{\tau-\tau^{-1}} h_{i-1}^{-1/4}h_{i}^{-1/4} a_{i-1} a^{\dag}_i, &\qquad&
 \eta (f_i) = \frac{\tau^{-3/2}}{\tau-\tau^{-1}} h_{i-1}^{-1/4}h_{i}^{-1/4} a^{\dag}_{i-1}a_{i}, \quad \text{for}~i=2,\dots,r,
\label{oschom}
\end{align}
where we used the following convenient notation
\begin{align}
h_0=c^{-1} \prod_{i=1}^r h_i^{-1}, \qquad a_0^{\dag}=c^{1/2}(1-h_0),\qquad a_0=c^{-1/2}. \label{Azero}
\end{align}
Note, $\{a_0,a_0^{\dag},h_0\}$ do not satisfy the relations of the $\tau$-oscillator algebra. With this notation we can write compactly the homomorphism $\eta$ for the simple roots $e_1=e_{0,1}$, $f_1=e_{1,0}$, $e_i=e_{i-1,i}$ ($i>1$) and $f_i=e_{i,i-1}$ ($i>1$) and for the higher roots $e_{i,j}$
\begin{align}
&\eta (e_{i,j}) = \frac{\tau^{1/2}}{\tau-\tau^{-1}}h_i^{-1/4}h_j^{-1/4}\prod_{i< l<j} h_l^{-1/2} a_i a^{\dag}_j,  &\quad& \text{for}~i<j\\
&\eta (e_{i,j}) = \frac{\tau^{-3/2}}{\tau-\tau^{-1}}h_i^{-1/4}h_j^{-1/4}\prod_{j< l<i} h_l^{1/2} a^{\dag}_j a_i,  &\quad& \text{for}~i>j,
\end{align}
where $i,j$ run from $0$ to $r$.
It is easy to verify the validity of these equations using (\ref{hroots1})--(\ref{hroots3}). Previously we were using the $t$-oscillator algebras $\mathcal{B}_i$ generated by $\{\phi_i,\phi_i^{\dag},k_i\}$, satisfying 
\begin{align*}
&k_i \phi_i = t^{-1}\phi_i k_i,\qquad 	k_i \phid_i = t \phid_i k_i, \nonumber \\
&\phi_i \phi_i^{\dag} = 1-t k_i,\qquad \phi_i^{\dag} \phi_i= 1-k_i.
\end{align*}
The algebras $\mathcal{B}_i$ are related with $\mathcal{A}_i$ simply by identifying $\phi_i=a_i$, $\phi^{\dag}_i=a^{\dag}_i$, $k_i=h_i$ and $t=\tau^2$. One also needs to add the notation $\mathcal{B}_0$
\begin{align}
k_0=c^{-1} \prod_{i=1}^r k_i^{-1}, \qquad \phi_0^{\dag}=c^{1/2}(1-k_0),\qquad \phi_0=c^{-1/2}. \label{Bzero}
\end{align}
Applying the fundamental homomorphism $\rho$ and oscillator homomorphism $\eta$ in (\ref{Lstoch}) and (\ref{Estoch}) we obtain the $R$-matrix and the $L$-matrix
\begin{align}
&R(x,y)=\text{id}\otimes \rho\left(\mathcal{L}^F(x,y)\right)=c~ t^{-1/2}\times \sum_{0\leq i,j\leq r} E_{j,i}\otimes  R_{i,j}(x,y), \nonumber \\
&L(x,y)=\text{id}\otimes \eta\left(\mathcal{L}^F(x,y)\right)=\frac{c}{(t^{1/2}-t^{-1/2})}\times \sum_{0\leq i,j\leq r} E_{j,i}\otimes  L_{i,j}(x,y), \nonumber
\end{align}
\begin{align}
&R_{i,j}(x,y)= 
\left\{
\begin{array}{ll}
x ~E_{i,j},  \qquad  &  i<j
\\
\\
\frac{t(x-y)}{t-1}\mathbb{I}+y\sum_{l=i}^rE_{l,l}-x\sum_{l=i+1}^rE_{l,l},  \qquad  & i=j
\\
\\
y~ E_{i,j} ,  \qquad  &   i>j.
\end{array}
\right.\\
&L_{i,j}(x,y)= 
\left\{
\begin{array}{ll}
x~\prod_{l=j+1}^{r}k_l ~\phi_i \phi^{\dag}_j,  \qquad  &  i<j
\\
\\
 \left( x- y  k_i\right)\prod_{l=i+1}^{r}k_l,  \qquad  & i=j
\\
\\
y ~ \prod_{l=j+1}^r k_l ~\phi^{\dag}_j \phi_i ,  \qquad  &   i>j.
\end{array}
\right. 
\end{align}
The above matrix $R(x,y)$ is the same as (\ref{Rmat-def}) up to the overall normalisation $c~t^{-1/2}$. In order to match $L(x,y)$ with (\ref{eq:Lmat}) we must recall the definition of $\mathcal{B}_0$ (\ref{Bzero}), set $c= u v$ and $y=v$, multiply the first column by $(u v)^{1/2}$ and the first row by $-(u/v)^{1/2}$ and finally normalise the matrix by the factor $(t^{1/2}-t^{-1/2})^{-1}(u v)$.

\section{Conclusion}
This paper contains three major results. First, we describe a general scheme for the construction of symmetric polynomials using a matrix product formalism. The main idea of this construction is to first define a family of polynomials $f_\mu$, indexed by compositions $\mu$, which solve the local quantum Knizhnik--Zamolodchikov (qKZ) exchange relations. Such polynomials $f_\mu$ are non-symmetric and can be expressed as matrix products if a certain solution to the Zamolodchikov--Faddeev (ZF) algebra can be found. A symmetric polynomial is then obtained by symmetrisation over the family $f_\mu$, similar to what occurs in the theory of non-symmetric Macdonald polynomials $E_\mu$. We emphasise here that the polynomials $f_\mu$ constructed in this paper are quite different from $E_\mu$.

Our second result is a general and constructive method to obtain solutions to the ZF algebra from $L$-matrix solutions to the Yang-Baxter equation, and the third result is a new bosonic $L$-matrix. The latter is obtained using a new homomorphism from the quantum group to families of deformed oscillators.

Using these three results, the approach outlined above culminates in a new family of polynomials that generalise Macdonald polynomials, and unifies these with another class of polynomials recently studied by Borodin and Petrov. By virtue of their construction we expect our new family to have a number of natural properties such as Cauchy identities, branching rules and Pieri identities.

\section*{Acknowledgments}

We gratefully acknowledge support from the Australian Research Council Centre of Excellence for Mathematical and Statistical Frontiers (ACEMS), and MW acknowledges support by an Australian Research Council DECRA. JdG would like to thank the KITP Program \textit{New approaches to non-equilibrium and random systems: KPZ integrability, universality, applications and experiments}, supported in part by the National Science Foundation under Grant No. NSF PHY11-25915, and with MW the program \textit{Statistical mechanics and combinatorics} at the Simons Center for Geometry and Physics, Stony Brook University, where part of this work were completed. MW would like to thank Alexei Borodin for kind hospitality at MIT and very stimulating discussions on related themes.

\end{document}